\documentclass{article}
\parindent0pt\parskip1ex
\usepackage{a4wide}

\usepackage{times,authblk}
\usepackage[UKenglish]{babel}
\usepackage{amsmath, amsfonts, amssymb, amsthm, bm, stmaryrd, enumitem, mathtools}
\usepackage{graphicx, tikz-cd}

\renewcommand{\epsilon}{\varepsilon}
\renewcommand{\phi}{\varphi}
\renewcommand{\rho}{\varrho}

\newtheorem{Def}{Definition}[section]
\newenvironment{definition}{\begin{Def} \rm}{\end{Def}}
\newtheorem{lemma}[Def]{Lemma}
\newtheorem{proposition}[Def]{Proposition}

\newtheorem{theorem}[Def]{Theorem}
\newtheorem{example}[Def]{Example}

\newtheorem*{theorem*}{Theorem}

\newcommand{\komma}{,\hspace{0.3em}}

\renewcommand{\leq}{\leqslant}
\renewcommand{\geq}{\geqslant}
\renewcommand{\emptyset}{\varnothing}

\newcommand{\Reals}{{\mathbb R}}
\newcommand{\Complexes}{{\mathbb C}}
\newcommand{\Quaternions}{{\mathbb H}}


\newcommand{\notperp}{\mathbin{\not\perp}}

\renewcommand{\c}{^\perp}
\newcommand{\cc}{^{\perp\perp}}

\newcommand{\herm}[2]{\left( #1 , #2 \right)}
\newcommand{\lin}[1]{[#1]}

 \DeclareMathOperator{\down}{\downarrow}


\begin{document}

\title{A characterisation of orthomodular spaces \\ by Sasaki maps}

\author{Bert Lindenhovius}

\author{Thomas Vetterlein}

\affil{\footnotesize
Department of Knowledge-Based Mathematical Systems,
Johannes Kepler University Linz \authorcr
Altenberger Stra\ss{}e 69, 4040 Linz, Austria}

\date{\today}

\maketitle

\begin{abstract}\parindent0pt\parskip1ex
\noindent Given a Hilbert space $H$, the set $P(H)$ of one-dimensional subspaces of $H$ becomes an orthoset when equipped with the orthogonality relation $\perp$ induced by the inner product on $H$. Here, an \emph{orthoset} is a pair $(X,\perp)$ of a set $X$ and a symmetric, irreflexive binary relation $\perp$ on $X$. In this contribution, we investigate what conditions on an orthoset $(X,\perp)$ are sufficient to conclude that the orthoset is isomorphic to $(P(H),\perp)$ for some orthomodular space $H$, where \emph{orthomodular spaces} are linear spaces that generalize Hilbert spaces. In order to achieve this goal, we introduce \emph{Sasaki maps} on orthosets, which are strongly related to Sasaki projections on orthomodular lattices. We show that any orthoset $(X,\perp)$ with sufficiently many Sasaki maps is isomorphic to $(P(H),\perp)$ for some orthomodular space, and we give more conditions on $(X,\perp)$ to assure that $H$ is actually a Hilbert space over $\mathbb R$, $\mathbb C$ or $\mathbb H$.

{\it Keywords:} orthoset; Sasaki projection; Hilbert space

{\it MSC:} 81P10; 06C15; 46C05

\mbox{}\vspace{-2ex}
\end{abstract}

\section{Introduction}\label{sec:introduction}

In order to obtain a deeper understanding of quantum physics, it is imperative to analyse the structure of complex Hilbert spaces, the objects that are used to represent quantum systems mathematically. For the following, it is useful to dissect the definition of a Hilbert space as follows. Firstly, we call a division ring equipped with an involutorial antiautomorphism $(-)^\star$ a $\star$-\emph{sfield}. Then an \emph{(anisotropic) Hermitian space} is a linear space over a $\star$-sfield $K$ that is equipped with an anisotropic, symmetric sesquilinear form $(-,-):H\times H\to K$. In case $H$ is a Hermitian space over one of the classical $\star$-sfields, that is, $\mathbb R$, $\mathbb C$, or $\mathbb H$, and if in addition $H$ is complete with respect to the norm induced by the inner product, we call $H$ a \emph{(classical) Hilbert space}.

One way to analyse the structure of Hilbert spaces that we aim at following, consists of trying to find different axioms for Hilbert spaces in terms of simpler structures. For example, the closed subspaces of Hilbert spaces form a complete orthomodular lattice, hence one could choose orthomodular structures as a starting point, and try to add conditions to assure that an orthomodular lattice is isomorphic to the lattice of closed subspaces of a Hilbert space. 

Since we aim at disentangling the structure of Hilbert spaces as much as possible, we choose an even simpler structure. Our starting point is the orthogonality relation $\perp$ on a Hilbert space, or more generally, a Hermitian space $H$: we say that $x,y\in H$ are \emph{orthogonal} and write $x\perp y$ if $\herm x y = 0$. 
The relation $\perp$ induces a symmetric and irreflexive binary relation on the set $P(H)$ of one-dimensional subspaces of $H$, representing the pure states of the quantum system modelled by $H$. This induced relation, which we also denote by $\perp$, is called the \emph{orthogonality relation} and keeps track of transitions between pure states that have a probability of zero. The pair $(P(H),\perp)$ is perhaps the bare minimum of a possible structure necessary for quantum physics, and Foulis proposed to generalize this structure as follows:

\begin{definition}
	An \emph{orthoset} (sometimes also called an \emph{orthogonality space}) is a set $X$ equipped with a symmetric and irreflexive binary relation $\perp$, called the \emph{orthogonality relation}. We call $x,y\in X$ \emph{orthogonal} if $x\perp y$. The \emph{rank} of $X$ is the supremum of the cardinalities of sets of mutually orthogonal elements.
\end{definition}

We refer to \cite{Dac,Wil} for more details on orthosets, and discuss here only the concepts we need for the following. The definition of an orthoset is essentially that of an undirected graph. However, only a few graph-theoretic concepts are relevant for our purposes. For instance a \emph{morphism of orthosets} is simply defined as a graph homomorphism. Moreover, a subset of mutually orthogonal elements of an orthoset, also called a \emph{$\perp$-set}, is precisely a clique of the orthoset regarded as a graph. Then the rank of the orthoset is the supremum of the cardinalities of its cliques. 

The main motivation for considering orthosets doesn't lie in graph theory, but in our example of the one-dimensional subspaces of a Hilbert spaces, which can be generalized to Hermitian spaces as follows:

\begin{example}\label{ex:hilbert space orthoset}
	Let $H$ be a Hermitian space over a $\star$-sfield $K$. For each nonzero $x\in H$, let $\lin x$ be the linear span of $x$. Then the relation $\perp$ on $H$ induces an orthogonality relation on $P(H) = \{ \lin x \colon x \in H \setminus \{0\} \}$, the projective space associated with $H$: $\lin x \perp \lin y$ means $x \perp y$. Obviously, $(P(H), \perp)$ is an orthoset.
\end{example}

We note that also other quantum structures give rise to orthosets, for instance the non-empty closed subspaces of a Hilbert space, representing the propositions of the quantum system represented by $H$. This example can be generalized to \emph{orthocomplemented} posets, i.e., bounded posets $P$ equipped with an order-reversing involution $(-)^\perp:P\to P$ such that $x\wedge x^\perp=0$ for each $x\in P$. 
\begin{example}\label{ex:OMP orthoset}
	Let $P$ be an orthocomplemented poset. Let $X=P\setminus\{0\}$. Let $\perp$ be the relation on $X$ defined by $x\perp y$ if and only if $x\leq y^\perp$. Then $(X,\perp)$ is an orthoset.
\end{example}

Since we view orthosets as abstractions of pure state spaces, we are less interested in the orthosets in Example \ref{ex:OMP orthoset}. Instead, we are interested in finding conditions that assure that a given orthoset $(X,\perp)$ is isomorphic to the orthoset in Example \ref{ex:hilbert space orthoset} for some Hilbert space $H$. In particular, this means we aim at achieving the following three goals: 
\begin{itemize}
	\item[(i)] finding conditions on $(X,\perp)$ that assure that $X=P(H)$ for some Hermitian space $H$ over a $\star$-sfield $K$;
	\item[(ii)] finding conditions on $(X,\perp)$ that imply that $K=\mathbb C$;
	\item[(iii)] finding conditions on $(X,\perp)$ that guarantee that $H$ is metrically complete with respect to the norm on $H$ induced by the form $(-,-)$.
\end{itemize}

The last goal can be reformulated as follows. Given a Hermitian space $H$, for any subset $S\subseteq H$ let $S^\perp$ be the set of all $x\in X$ such that $x\perp y$ for each $y\in S$. Then $S^\perp$ is a subspace of $H$ and any subspace of this form is called  \emph{closed}. We say that $H$ is \emph{orthomodular} if $M+M^\perp=H$ for each closed subspace $M$ of $H$. Now, a Hermitian space $H$ over one of the classical $\star$-sfields, i.e., $\mathbb C$, $\mathbb R$ or $\mathbb H$, is orthomodular if and only if $H$ is metrically complete \cite{AmAr}, so (iii) can be replaced by:  
\begin{itemize}
	\item[(iii')] finding conditions on $(X,\perp)$ that guarantee that $H$ is an orthomodular space. \end{itemize}
We note that (i) and (iii') can be combined into:
\begin{itemize}
	\item[(i')] finding conditions on $(X,\perp)$ that assure that $X=P(H)$ for some orthomodular space $H$ over a $\star$-sfield $K$. \end{itemize}

This contribution deals with goal (i'). Under the assumption that (i') has been achieved, (ii) already has been dealt with in a satisfactory way by means of automorphisms, which is a natural choice of structure, since for any Hilbert space $H$, automorphisms on $(P(H),\perp)$ are related to unitary transformations on $H$, which describe the symmetries of the physical system represented by $H$ such as time evolution. See also Section \ref{sec:Classical Hilbert spaces}.


With respect to goal (i'), it has been known for a long time how to characterize orthomodular spaces by lattice-theoretic means. However, the involved properties do not have a straightforward physical meaning. Moreover, this requires associating lattices to orthosets, whereas we prefer to formulate our conditions directly in the framework of orthosets.

More explicitly, given $A\subseteq X$, we can define the set $A^\perp:=\{x\in X:x\perp a\text{ for each }a\in A\}$. We call $A\subseteq X$ \emph{orthoclosed} if $A=A^{\perp\perp}$. 
Given a Hermitian space $H$, any closed subspace $S$ of $H$ is a Hermitian space itself, hence the notation $P(S)$ makes sense. Then all orthoclosed subsets of $(P(H),\perp)$ are of the form $P(S)$ for some closed subspace $S$ of $H$.

The set $\mathcal C(X,\perp)$ of all orthoclosed subsets of an orthoset $(X,\perp)$ is a complete ortholattice if we order it by inclusion and with $(-)^\perp$ as orthocomplementation. Conversely, given an atomistic ortholattice $L$ with orthocomplementation $(-)^\perp$, the set $\mathrm{At}(L)$ of atoms of $L$ becomes an orthoset when equipped with the relation $\perp$ defined by $a\perp b$ if and only if $a\leq b^\perp$. The next lemma states that these operations yield a one-to-one correspondence between some class of ortholattices and some class of orthosets:

\begin{lemma}\cite[Proposition 2]{Vet3}\label{lem:pointclosed-completeatomistic}
	Let $(X,\perp)$ be a \emph{point-closed} orthoset, i.e., every singleton subset of $X$ is orthoclosed. Then $L:=\mathcal C(X,\perp)$ is a complete atomistic ortholattice, and the map $X\to\mathrm{At}(L)$, $x\mapsto\{x\}$ is an isomorphism of orthosets.
	
	Conversely, given a complete atomistic ortholattice $L$, let $X:=\mathrm{At}(L)$. Then $(X,\perp)$ is a point-closed orthoset, and the map $L\to \mathcal C(X,\perp)$, $p\mapsto\{x\in X:x\leq p\}$ is an isomorphism of ortholattices.
\end{lemma}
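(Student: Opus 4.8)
The plan is to route everything through the polarity $A \mapsto A\c$ on subsets of $X$ and its closure operator $A \mapsto A\cc$, using only the facts already recalled above: $\mathcal C(X,\perp)$ is a complete ortholattice in which the meet of a family is its intersection and the join is the biorthogonal closure of the union, the bottom is $\emptyset$ (indeed $X\c = \emptyset$ by irreflexivity of $\perp$, so $\emptyset\cc = X\c = \emptyset$), and the top is $X$.

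For the first statement I would first note that point-closedness says exactly that $\{x\} \in L$ for every $x \in X$, and that each such $\{x\}$ is an atom of $L$: any orthoclosed $A$ with $\emptyset \subsetneq A \subseteq \{x\}$ is forced to equal $\{x\}$. Conversely, if $A$ is an atom of $L$ then $A \neq \emptyset$, and picking $x \in A$ gives $\emptyset \neq \{x\} \subseteq A$ with $\{x\} \in L$, so $A = \{x\}$. Hence $x \mapsto \{x\}$ is a bijection $X \to \mathrm{At}(L)$. That $L$ is atomistic then follows at once: for $A \in L$ we have $\bigvee_{x\in A}\{x\} = \bigl(\bigcup_{x\in A}\{x\}\bigr)\cc = A\cc = A$. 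Finally, this bijection is an isomorphism of orthosets because $\{y\}\c = \{z \in X : z \perp y\}$, so the relation $\{x\} \leq \{y\}\c$ on $\mathrm{At}(L)$ — which by definition is $\{x\} \perp \{y\}$ there — holds if and only if $x \perp y$.

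For the converse, start with a complete atomistic ortholattice $L$, put $X := \mathrm{At}(L)$, and let $a \perp b$ mean $a \leq b\c$. Symmetry of $\perp$ is the order-reversing involution law for $(-)\c$, and irreflexivity holds since $a \perp a$ would give $a = a \wedge a\c = 0$, impossible for an atom. The key computation is that for $A \subseteq X$ one has $A\c = \{x \in X : x \leq (\bigvee A)\c\}$: a $\perp$-common complement $x$ of all $a \in A$ satisfies $x \leq \bigwedge_{a\in A} a\c = (\bigvee A)\c$, and conversely. Writing $X_p := \{x \in X : x \leq p\}$ and $\phi(p) := X_p$, this reads $A\c = X_{(\bigvee A)\c}$; in particular $\phi(p)\c = X_{p\c} = \phi(p\c)$, so every $X_p$ is orthoclosed (apply $\c$ twice, using $p\cc = p$) and $\phi$ commutes with orthocomplementation. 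Atomisticity gives $\bigvee X_p = p$, whence $\phi$ is injective and order-reflecting, and for orthoclosed $A$ we get $A = A\cc = \phi\bigl(\bigvee A\bigr)$, so $\phi$ maps onto $\mathcal C(X,\perp)$; it is clearly order-preserving, so it is an ortholattice isomorphism. Point-closedness of $(X,\perp)$ is then the special case that $\{a\} = X_a$ is orthoclosed, or directly $\{a\}\cc = X_{a\cc} = X_a = \{a\}$, the last step because the only atom below the atom $a$ is $a$ itself.

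I do not expect a genuine obstacle; the argument is essentially bookkeeping around a Galois connection. The one thing to be careful about is not conflating the two meanings of $(-)\c$ — the lattice orthocomplement on $L$ versus the polarity on $\mathcal P(X)$ — and consistently invoking atomisticity in the form $p = \bigvee\{a \in \mathrm{At}(L) : a \leq p\}$, which is exactly what makes $\phi$ and $x \mapsto \{x\}$ mutually inverse and ties joins in $L$ to biorthogonal closures in $\mathcal C(X,\perp)$.
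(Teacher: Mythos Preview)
Your argument is correct in all details. Note, however, that the paper does not supply its own proof of this lemma: it is stated with a citation to \cite[Proposition 2]{Vet3} and no argument is given in the text, so there is nothing in the paper to compare your approach against. Your route---identifying the atoms as the singletons via point-closedness, showing atomisticity from $A = \bigl(\bigcup_{x\in A}\{x\}\bigr)\cc$, and for the converse direction using the key identity $A\c = \{x\in\mathrm{At}(L): x\leq(\bigvee A)\c\}$ together with atomisticity in the form $p=\bigvee X_p$---is the standard Galois-connection bookkeeping one would expect, and your explicit caution about the two uses of $(-)\c$ is well placed.
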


Because we are interested in models of pure state spaces, we will restrict ourselves to point-closed orthosets.


The crucial lattice-theoretic condition one usually imposes on $\mathcal C(X,\perp)$ is the \emph{covering property}, which an ortholattice $L$ possesses if for each $x\in L$ and atom $a\in L$ such that $a\wedge x=0$ (or equivalently, such that $a\nleq x$), we have that $x\vee a$ covers $x$. This is a property that has to be stated in the framework of ortholattices instead of in that of orthosets. Moreover, at first sight it is not clear what the physical interpretation of the covering property is.

We aim at solving all these issues by introducing so-called \emph{Sasaki maps} on orthosets. These maps are closely related to Sasaki projections, which explains our choice of terminology, and are natural to use because their definition reminds us to that of a projection operator, since the first defining condition expresses idempotency, and the second self-adjointness. 

For the next definition, we introduce the notation $\complement S$ to denote the set-theoretic complement of a subset $S$ of a set $X$.

\begin{definition}
	Let $A$ be an orthoclosed subset of an orthoset $(X,\perp)$. A map
	\[ \phi \colon \complement A\c \to A \]
	is called a \emph{Sasaki map} to $A$ if the following conditions hold:
	\begin{itemize}
		
		\item[\rm (S1)] $\phi(e) = e$ for all $e \in A$.
		
		\item[\rm (S2)] For any $e, f \in\complement A^\perp$, we have $\phi(e) \perp f \text{ if and only if } e \perp \phi(f)$.
	\end{itemize}
	We call $(X,\perp)$ a \emph{Sasaki space} if for any orthoclosed subset $A$ of $X$, there exists a Sasaki map $\varphi_A$ to $A$.
\end{definition}

\begin{definition}
An orthoset $(X,\perp)$ is called \emph{reducible} if $X$ is the disjoint union of two non-empty sets $A$ and $B$ such that $e \perp f$ for any $e \in A$ and $f \in B$. If $(X,\perp)$ is not reducible, we call it \emph{irreducible}.
\end{definition}
We note that if $(X,\perp)$ is a reducible, and can be written as the union of $A$ and $B$ as defined above, it is still possible that $A$ contains distinct mutually orthogonal elements. Hence, reducibility is not the same as stating that $(X,\perp)$ is a bipartite graph. However, there is an illuminating graph-theoretic description of irreducibility. We call a graph $(V,E)$ \emph{connected} if for each pair of vertices $x,y\in V$, there is a finite path from $x$ to $y$, i.e., there are $v_1,v_2,\ldots,v_n$ such that $x=v_1$, $y=v_n$ and $v_i Ev_{i+1}$ for $i=1,2,\ldots,n-1$. If we regard an orthoset $(X,\perp)$ as a graph, then its complement graph is precisely $(X,\notperp)$. 

\begin{lemma}
	An orthoset $(X,\perp)$ is irreducible if and only if $(X,\not\perp)$ is connected when regarded as a graph.
\end{lemma}
\begin{proof}
	Assume $(X,\perp)$ is reducible, so $X$ is the disjoint union of non-empty subsets $A$ and $B$ such that $a\perp b$ for each $a\in A$ and each $b\in B$. Fix $a\in A$ and $b\in B$. If $(X,\not\perp)$ were connected, there would be a path $v_1,\ldots,v_n$ from $a$ to $b$, so $a=v_1$, $b=v_n$ and $v_i\not\perp v_{i+1}$ for each $i=1,\ldots,n-1$. Now, there must be some $i\in\{1,\ldots,n-1\}$ such that $v_i\in A$ and $v_{i+1}\in B$. Hence, we obtain both $v_i\perp v_{i+1}$ since $(X,\perp)$ is reducible, and $v_i\notperp v_{i+1}$ since $v_1,\ldots,v_n$ is a path in $(X,\not\perp)$, which is clearly a contradiction. So $(X,\not\perp)$ must be disconnected. 
	
	Conversely, assume $(X,\not\perp)$ is disconnected. Then $(X,\notperp)$ has at least two non-empty connected components. Let $A$ be one of these components, and let $B$ be its set-theoretic complement in $X$. Let $a\in A$ and $b\in B$. Then there is no path from $a$ to $b$ in $(X,\perp)$, and in particular, we do not have $a\notperp b$. Thus, $a\perp b$, showing that $(X,\perp)$ is reducible.
	\end{proof}

We are now able to formulate our main theorem, which implies that  Sasaki maps do not only provide a way to incorporate the covering property within the framework of orthosets, but they also provide a condition that is strong enough to assure that an orthoset is induced by a Hermitian space, even an orthomodular space.

\begin{theorem}
	Let $(X,\perp)$ be an irreducible, point-closed Sasaki space of rank $\geq 4$. Then $(X,\perp)$ is isomorphic to $(P(H),\perp)$ for some orthomodular space $H$.
	
	Under this isomorphism, the Sasaki maps are, for any closed subspace $S$ of $H$ and for any $x\notperp S$, given by
	\[ \phi_{P(S)}(\lin x) \;=\; (\lin x + S\c) \cap S. \]
\end{theorem}
The latter expression for $\phi_{P(S)}$ indeed shows the relation between Sasaki maps and Sasaki projections. 

We give an outline for the rest of the article. In Section \ref{sec:examples}, we give some examples of Sasaki spaces. In Section \ref{sec:orthomodular spaces}, we describe the relation between Sasaki spaces and orthomodular spaces, culminating in the proof of our main theorem. In Section \ref{sec:Sasaki projections}, we show that Sasaki maps of a Sasaki space $(X,\perp)$ induce a full Sasaki set of projections on $\mathcal C(X,\perp)$ in the sense of Finch. Finally, in the last section we give a characterization of infinite-dimensional classical Hilbert spaces in terms of Sasaki spaces.

\section{Examples of Sasaki spaces}\label{sec:examples}

In this section, we explore some examples of Sasaki spaces. The most important example is provided by our guiding example:

\begin{proposition}
	Let $H$ be an orthomodular space. Then $(P(H),\perp)$ is a Sasaki space.
\end{proposition}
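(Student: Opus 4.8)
The plan is to exhibit, for each orthoclosed $A\subseteq P(H)$, the Sasaki map announced in the main theorem and to verify (S1) and (S2) directly. As recalled in the excerpt, every orthoclosed subset of $(P(H),\perp)$ is of the form $A=P(S)$ for a closed subspace $S\subseteq H$, and one checks at once that $P(S)\c=P(S\c)$, so that $\complement P(S)\c=\{\lin x:x\notperp S\}=\{\lin x:x\notin S\c\}$. The crucial structural input is orthomodularity: since $S$ is closed, $H=S+S\c$, and since the form is anisotropic, $S\cap S\c=\{0\}$; hence every $x\in H$ has a \emph{unique} decomposition $x=s+t$ with $s\in S$ and $t\in S\c$, and $x\notperp S$ forces $s\neq 0$ (otherwise $x=t\in S\c$). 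I would then define $\phi_A(\lin x):=\lin s$. This is well defined: uniqueness of the decomposition together with the fact that replacing $x$ by a scalar multiple scales $s$ by the same factor shows $\lin s$ depends only on $\lin x$, and $\lin s\in P(S)=A$. That this coincides with $(\lin x+S\c)\cap S$ is a one-line check: $\lin s\subseteq(\lin x+S\c)\cap S$ because $x-s=t\in S\c$, and conversely, writing $v=\lambda s+w$ with $w\in S\c$ for $v\in(\lin s+S\c)\cap S$, we get $w=v-\lambda s\in S\cap S\c=\{0\}$, so $v\in\lin s$.

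Condition (S1) is immediate: if $\lin x\in A=P(S)$ we may take $x\in S$, whence $s=x$ and $\phi_A(\lin x)=\lin x$. For (S2), take $\lin x,\lin y\in\complement A\c$ with decompositions $x=s+t$ and $y=s'+t'$ (so $s,s'\in S$ and $t,t'\in S\c$). Since $s\in S$ and $t'\in S\c$ we have $\herm{s}{t'}=0$, and likewise $\herm{t}{s'}=0$; therefore $\phi_A(\lin x)\perp\lin y$ holds iff $\herm{s}{y}=0$ iff $\herm{s}{s'}=0$, while $\lin x\perp\phi_A(\lin y)$ holds iff $\herm{x}{s'}=0$ iff $\herm{s}{s'}=0$. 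Hence the two conditions are equivalent, which is precisely (S2). As this works for every orthoclosed $A$, $(P(H),\perp)$ is a Sasaki space.

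There is no genuine obstacle here; essentially all the content sits in the appeal to orthomodularity, which is what makes the $S$-component of $x$ available and thereby guarantees that $(\lin x+S\c)\cap S$ is a genuine one-dimensional subspace rather than $\{0\}$ --- without orthomodularity the prescription need not even define a map into $A$. If one prefers not to invoke the classification of orthoclosed subsets of $P(H)$, one can instead note that for an arbitrary $B\subseteq P(H)$ the linear span of $\bigcup B$ has a closure $S$ with $B\cc=P(S)$, reducing to the case treated above.
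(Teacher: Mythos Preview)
Your proof is correct and follows essentially the same route as the paper: the paper simply observes that orthomodularity furnishes, for each closed subspace $S$, the orthogonal projection $P_S$ (an idempotent self-adjoint linear operator) and sets $\phi(\lin x)=\lin{P_S(x)}$, leaving the verification of (S1) and (S2) to the reader. Your argument is exactly this with the details spelled out---you construct $P_S$ explicitly via the direct-sum decomposition $H=S\oplus S\c$ and then check idempotence and self-adjointness by hand in the form needed for (S1) and (S2).
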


\begin{proof}
	Since $H$ is orthomodular, it follows that for any closed subspace $S$ of $H$ the orthogonal projection $P_S$ associated to $S$ exists. This map $P_S$ is an idempotent, self-adjoint linear operator. For any $x \notperp S$, we define $\phi(\lin x) = \lin{P_S(x)}$. Then $\phi$ is a map from $\complement P(S)\c$ to $P(S)$ with the required properties, where we recall that $\complement P(S)^\perp$ is the set-theoretic complement of $P(S)^\perp$ in $P(H)$.
\end{proof}

Before we identify some examples of Sasaki spaces that are not associated to some orthomodular space, we need one lemma.

\begin{lemma}\label{lem:sasakimap}
	Let $(X,\perp)$ be an orthoset and let $A$ be an orthoclosed subset of $X$. Then each of the following conditions is sufficient to assure the existence of a Sasaki map for $A$:
	\begin{itemize}
		\item[(a)] $A^\perp=\complement A$;
		\item[(b)] $\complement A^\perp$ is a $\perp$-set;
		\item[(c)] $A$ is a singleton.
	\end{itemize}
\end{lemma}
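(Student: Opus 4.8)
The plan is to verify each of the three conditions separately, after recording one observation used throughout: $A \subseteq \complement A\c$, i.e.\ $A \cap A\c = \emptyset$. Indeed, if $e \in A$ then $e \notperp e$ by irreflexivity, so $e$ is not orthogonal to every element of $A$, whence $e \notin A\c$. Thus $A$ always lies in the domain $\complement A\c$ of a prospective Sasaki map, so that (S1) is a meaningful requirement; note also that in the degenerate case $A = \emptyset$ one has $A\c = X$, $\complement A\c = \emptyset$, and the empty map is vacuously a Sasaki map (this case is subsumed under (a) and (b) below anyway).

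For (a) and (b) I would show that the hypothesis forces $\complement A\c = A$, and then observe that the identity map $\id_A$ --- which, since domain and codomain then coincide, is by (S1) the only candidate --- is a Sasaki map: (S1) holds trivially, and (S2) reduces to $e \perp f \Leftrightarrow e \perp f$, which is true by symmetry of $\perp$. Under hypothesis (a) we have $A\c = \complement A$, hence $\complement A\c = X \setminus A\c = A$. Under hypothesis (b), suppose $\complement A\c$ is a $\perp$-set and take any $e \in \complement A\c$; by definition of $A\c$ there is some $a \in A$ with $e \notperp a$, and since $a \in A \subseteq \complement A\c$, both $e$ and $a$ lie in the $\perp$-set $\complement A\c$, so either $e = a$ or $e \perp a$; the latter contradicts $e \notperp a$, so $e = a \in A$. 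This proves $\complement A\c \subseteq A$, and together with the observation above, $\complement A\c = A$.

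For (c), write $A = \{a\}$. Then the only map $\complement A\c \to A$ at all is the constant map $\phi$ with value $a$, and since $\phi(a) = a$, (S1) holds. For (S2), recall that every $e \in \complement A\c$ satisfies $e \notperp a$ by the definition of $\{a\}\c$; hence, for any $e, f \in \complement A\c$, the statement $\phi(e) \perp f$ --- that is, $a \perp f$ --- and the statement $e \perp \phi(f)$ --- that is, $e \perp a$ --- are both false, so the biconditional in (S2) holds vacuously.

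I do not expect a genuine obstacle here: the whole argument is an unwinding of the definitions. The one point that warrants a moment's care is the reduction $\complement A\c = A$ in case (b); it may look surprising that so much follows from merely assuming $\complement A\c$ to be a $\perp$-set, but this is precisely what the definition of $A\c$ together with irreflexivity yield.
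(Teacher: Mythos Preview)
Your proof is correct and follows essentially the same approach as the paper: reduce (a) and (b) to the case $\complement A\c = A$ and take the identity, and for (c) take the constant map and observe both sides of (S2) are false. The only difference is cosmetic: the paper states (b) as ``$\complement A\c$ a $\perp$-set implies $A\c = \complement A$'' and invokes (a), whereas you argue directly that $\complement A\c \subseteq A$; these are equivalent, and your version supplies the detail the paper omits.
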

\begin{proof}
	For (a), we note that $A^\perp=\complement A$ implies $\complement A^\perp=A$, whence we can take the identity on $A$. For (b), if $\complement A^\perp$ is a $\perp$-set, then $A^\perp=\complement A$, hence this case reduces to (a). Finally, given (c), i.e.,  if $A$ is a singleton, say $A=\{a\}$ for some $a\in X$, then the only possible map $\varphi:\complement A^\perp\to A$ is the map that is constant $a$. Since $\varphi$ clearly satisfies (S1), we only have to verify (S2). For each $e\in\complement A^\perp$, we have $e\notin A^\perp$, so $e\not\perp a$. Hence for each $e,f\in\complement A^\perp$, we have $\varphi(e)=a\not\perp f$ and $e\not\perp a=\varphi(f)$, so $\varphi$ is a Sasaki map.
\end{proof}

\begin{example}
	Let $X$ be a set and let $\perp$ be the relation $\neq$, or equivalently, let $(X,\perp)$ be the complete undirected graph on $|X|$ vertices. Then ${\mathcal C}(X,\perp)={\mathcal P}(X)$, the power set of $X$. Moreover, for each $A\in{\mathcal C}(X,\perp)$, we have $A^\perp=\complement A$, whence $(X,\perp)$ is a Sasaki space by (a) of Lemma \ref{lem:sasakimap}.
\end{example}

\begin{example}
	Let $(X,\perp)$ be the orthoset with $X=\{a,b,c,d\}$ with $a\perp b\perp c\perp d\perp a$, corresponding to the cyclic undirected graph on four vertices as depicted below:
	\begin{center}
		\begin{tikzpicture}
			\node (a) at (0,0) {$a$};
			\node [right  of=a] (b)  {$b$};
			\node [below  of=a] (d)  {$d$};
			\node [below  of=b] (c)  {$c$};
			\draw [black,  thick] (a) -- (b);
			\draw [black,  thick] (c) -- (b);
			\draw [black,  thick] (c) -- (d);
			\draw [black,  thick] (a) -- (d);
		\end{tikzpicture}
	\end{center}
	Then ${\mathcal C}(X,\perp)=\{X,\emptyset,\{a,c\},\{b,d\}\}$, where $\{a,c\}^\perp=\{b,d\}$ and $\{b,d\}^\perp=\{a,c\}$. For any $A\in{\mathcal C}(X,\perp)$, we have $\complement A^\perp=A$, hence $(X,\perp)$ is a Sasaki space by (a) of Lemma \ref{lem:sasakimap}.
\end{example}

\begin{example}
	Let $(X,\perp)$ be the orthoset with $X=\{a,b,c,d\}$, and where the orthogonality relation $\perp$ is specified by $a\perp b$ and $c\perp d$, so $(X,\perp)$ corresponds to the undirected graph below:
	\begin{center}
		\begin{tikzpicture}
			\node (a) at (0,0) {$a$};
			\node [right  of=a] (b)  {$b$};
			\node [below  of=a] (d)  {$d$};
			\node [below  of=b] (c)  {$c$};
			\draw [black,  thick] (a) -- (b);
			\draw [black,  thick] (c) -- (d);
		\end{tikzpicture}
	\end{center}
	Then $\{a\}^\perp=\{b\}$, and $\{c\}^\perp=\{d\}$. For any $S\subseteq X$ consisting of more than one point we have $S^\perp=\emptyset$. Hence ${\mathcal C}(X,\perp)=\{X,\emptyset,\{a\},\{b\},\{c\},\{d\}\}$. Since $\complement X=\emptyset=X^\perp$, there exist Sasaki maps for both $X$ and $\emptyset$ using Lemma \ref{lem:sasakimap}.(a). All other sets in ${\mathcal C}(X,\perp)$ are singletons, whence $(X,\perp)$ is a Sasaki space by (c) of the same lemma.
\end{example}

The name `Sasaki map' suggests a connection with Sasaki projections on an orthomodular lattice. We include a definition: 
\begin{definition}\label{def:Sasaki projection}
	Let $L$ be an orthomodular lattice. Then the map $\pi_x:L\to L$, $y\mapsto x\wedge(x^\perp\vee y)$ is called the \emph{Sasaki projection} to $x\in L$.
\end{definition}

Indeed, in Proposition \ref{prop:OMLinducesSasakispace} below we will show that any Sasaki projection on a complete orthomodular lattice induces a Sasaki map on an orthoset associated to the lattice as in Example \ref{ex:OMP orthoset}. Before we show this, we need the following facts about Sasaki projections:

\begin{lemma}\label{lem:Sasaki projection facts}
	Let $L$ be an orthomodular lattice, and let $x\in L$. Then for each $y,z\in L$ we have
	\begin{itemize}
		\item[(a)] $y\leq x$ if and only if $\pi_x(y)=y$;
		\item[(b)] $\pi_x(\pi_x(y^\perp)^\perp))\leq y$;
		\item[(c)] $\pi_x(y)=0$ if and only if $y\leq x^\perp$;
		\item[(d)] $\pi_x(y)\perp z$ if and only if $y\perp \pi_x(z)$.
	\end{itemize}
\end{lemma}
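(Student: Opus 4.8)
The plan is to reduce all four statements to the orthomodular law in two equivalent forms. Recall that an ortholattice $L$ is orthomodular precisely when $a \leq b$ implies $b = a \vee (b \wedge a\c)$; applying $(-)\c$ together with the De Morgan laws, this is equivalent to the statement that $b \leq a$ implies $b = a \wedge (a\c \vee b)$. Call these (OM) and (OM${}^*$). Note that (OM${}^*$) already says that $\pi_x(b) = b$ whenever $b \leq x$, which, together with the fact that $\pi_x$ always takes values below $x$, will do most of the work; of the four items, only (d) requires genuine effort, via a Galois-type adjunction. I will also use freely that in an ortholattice $(a \wedge b)\c = a\c \vee b\c$, $(a \vee b)\c = a\c \wedge b\c$ and $a\cc = a$, so that in particular $\pi_x(a)\c = x\c \vee (x\c \vee a)\c = x\c \vee (x \wedge a\c)$.

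For (a): if $y \leq x$ then $\pi_x(y) = x \wedge (x\c \vee y) = y$ is exactly (OM${}^*$); conversely, $\pi_x(y) = y$ forces $y = x \wedge (x\c \vee y) \leq x$. For (c): if $y \leq x\c$ then $x\c \vee y = x\c$, so $\pi_x(y) = x \wedge x\c = 0$; conversely, if $\pi_x(y) = 0$, then applying (OM) to $x\c \leq x\c \vee y$ gives $x\c \vee y = x\c \vee \big((x\c \vee y) \wedge x\big) = x\c \vee 0 = x\c$, whence $y \leq x\c$. For (b): using the formula for $\pi_x(a)\c$ recorded above, $\pi_x(y\c)\c = x\c \vee (x \wedge y)$, and therefore
\[ \pi_x\big(\pi_x(y\c)\c\big) \;=\; x \wedge \big(x\c \vee x\c \vee (x \wedge y)\big) \;=\; x \wedge \big(x\c \vee (x \wedge y)\big) \;=\; x \wedge y, \]
the last equality being (OM${}^*$) applied to $x \wedge y \leq x$; in particular $\pi_x(\pi_x(y\c)\c) \leq y$ (in fact (b) holds with equality to $x \wedge y$).

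For (d), the crux is the \emph{Sasaki adjunction}: for all $y, w \in L$ one has $\pi_x(y) \leq w$ if and only if $y \leq x\c \vee (x \wedge w)$. For the forward direction, from $\pi_x(y) \leq w$ and $\pi_x(y) \leq x$ we obtain $\pi_x(y) \leq x \wedge w$, hence $x\c \vee \pi_x(y) \leq x\c \vee (x \wedge w)$; but (OM) applied to $x\c \leq x\c \vee y$ gives $x\c \vee \pi_x(y) = x\c \vee \big((x\c \vee y) \wedge x\big) = x\c \vee y \geq y$. For the backward direction, $y \leq x\c \vee (x \wedge w)$ implies $x\c \vee y \leq x\c \vee (x \wedge w)$, so $\pi_x(y) = x \wedge (x\c \vee y) \leq x \wedge \big(x\c \vee (x \wedge w)\big) = x \wedge w \leq w$, using (OM${}^*$) on $x \wedge w \leq x$. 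Statement (d) then follows: $\pi_x(y) \perp z$ means $\pi_x(y) \leq z\c$, which by the adjunction is equivalent to $y \leq x\c \vee (x \wedge z\c) = \pi_x(z)\c$, i.e.\ to $y \perp \pi_x(z)$.

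All the individual computations are short once the orientation of the orthomodular law is fixed. The one step demanding a little care is the adjunction underlying (d): the delicate point is that it is the trivial inequality $\pi_x(y) \leq x$ which lets one replace $w$ by $x \wedge w$ in the forward implication, and symmetrically the identity $x \wedge (x\c \vee (x \wedge w)) = x \wedge w$ in the backward one — so in the end (d) is just two well-aimed applications of orthomodularity.
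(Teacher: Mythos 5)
Your proof is correct, but it takes a different route from the paper. The paper does not prove (a)--(c) at all: it cites them from Foulis (items (a) and (b) from the 1960 Baer ${}^*$-semigroup paper, item (c) from the 1962 note on orthomodular lattices), and then derives (d) from (b) together with the monotonicity of $\pi_x$ (from $z \leq \pi_x(y)^\perp$ one gets $\pi_x(z) \leq \pi_x(\pi_x(y)^\perp) \leq y^\perp$ by (b), and symmetry of $\perp$ finishes). You instead give a self-contained derivation of all four items from the orthomodular law, organising (d) around the adjunction $\pi_x(y) \leq w \iff y \leq x^\perp \vee (x \wedge w)$. The two proofs of (d) are really the same fact in two packagings: the paper's argument via monotonicity plus the inequality (b) is exactly the unit--counit form of your Galois connection. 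What your version buys is self-containedness, the sharper statement that (b) holds with equality $\pi_x(\pi_x(y^\perp)^\perp) = x \wedge y$, and a formulation (the adjunction) from which monotonicity and join-preservation of $\pi_x$ --- used later in the paper's Section 4 --- fall out for free; what the paper's version buys is brevity, at the cost of two external references.
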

\begin{proof}
	Properties (a) and (b) follow from the paragraph between Lemma 5 and Theorem 6 in \cite{Fou60}. Property (c) follows from \cite[Lemma 1]{Fou62}. For (d), assume that $\pi_x(y)\perp z$, i.e.,  $z\leq\pi_x(y)^\perp$. Since $\pi_x$ is clearly monotone, we have $\pi_x(z)\leq \pi_x(\pi_x(y)^\perp))$, which implies $\pi_x(z)\leq y^\perp$ by (b). Thus, $y\perp\pi_x(z)$. By interchanging $y$ and $z$ in the obtained implication and using the symmetry of $\perp$ twice, we obtain the implication in the other direction.
\end{proof}

\begin{proposition}\label{prop:OMLinducesSasakispace}
	Let $L$ be a complete orthomodular lattice, and let $(X,\perp)$ be the orthoset obtained from $L$ as in Example \ref{ex:OMP orthoset}, so $X=L\setminus\{0\}$. Then $(X,\perp)$ is a Sasaki space with ${\mathcal C}(X,\perp)=\{X\cap\down x:x\in L\}$, which is isomorphic to $L$ as an orthomodular lattice. For any $A=X\cap\down x$ in $\mathcal C(X,\perp)$, restricting and corestricting the Sasaki projection $\pi_{x}:L\to L$ to a map $\varphi_A:\complement A^\perp\to A$ yields a Sasaki map to $A$.
\end{proposition}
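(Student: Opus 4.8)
The plan is to split the argument into two essentially independent parts. First I would determine $\mathcal C(X,\perp)$ explicitly together with its lattice structure; then, for each orthoclosed set $A$, I would check that the indicated restriction of a Sasaki projection satisfies (S1) and (S2), using the facts assembled in Lemma \ref{lem:Sasaki projection facts}.

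For the first part, the computation to carry out is that for every $S\subseteq X$,
\[ S\c \;=\; \{ y\in X : y\leq s\c \text{ for all } s\in S \} \;=\; X\cap\down\Bigl(\bigwedge\nolimits_{s\in S}s\c\Bigr) \;=\; X\cap\down\Bigl(\bigl(\bigvee\nolimits_{s\in S}s\bigr)\c\Bigr), \]
the last step being the De Morgan law in the complete ortholattice $L$ (with the usual conventions when $S=\emptyset$). This shows that every orthoclosed set has the form $X\cap\down x$. Conversely, observing that $\bigvee(X\cap\down x)=x$ for $x\neq 0$ and treating $x=0$ separately, one obtains $(X\cap\down x)\c=X\cap\down x\c$, hence $(X\cap\down x)\cc=X\cap\down x$, so each such set is orthoclosed; thus $\mathcal C(X,\perp)=\{X\cap\down x:x\in L\}$. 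Finally I would note that $x\mapsto X\cap\down x$ is a bijection $L\to\mathcal C(X,\perp)$ that preserves and reflects inclusion and, by the displayed identity, intertwines the two orthocomplementations; hence it is an isomorphism of ortholattices, and orthomodularity of $\mathcal C(X,\perp)$ is inherited from $L$.

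For the second part, fix $A=X\cap\down x$ (the element $x$ being determined by $A$ through the injectivity just used). Then $A\c=X\cap\down x\c$, so $\complement A\c=\{y\in X:y\not\leq x\c\}$. For such $y$, Lemma \ref{lem:Sasaki projection facts}(c) gives $\pi_x(y)\neq 0$, and $\pi_x(y)=x\wedge(x\c\vee y)\leq x$, so $\pi_x(y)\in X\cap\down x=A$; this makes the restriction--corestriction $\varphi_A\colon\complement A\c\to A$ of $\pi_x$ a well-defined map. Condition (S1), for $e\in A$, i.e.\ $e\leq x$, is precisely Lemma \ref{lem:Sasaki projection facts}(a), and condition (S2) is precisely Lemma \ref{lem:Sasaki projection facts}(d). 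Since $A$ was an arbitrary orthoclosed set, $(X,\perp)$ is a Sasaki space and $\varphi_A$ is the asserted Sasaki map.

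I do not anticipate a genuine obstacle here: once Lemma \ref{lem:Sasaki projection facts} is in place, the argument is essentially bookkeeping. The only points requiring a little care are the completeness of $L$, which is needed to form the arbitrary joins and meets in the De Morgan step, and the degenerate cases $x\in\{0,1\}$, where $A$ is $\emptyset$ or $X$, in the description of $\mathcal C(X,\perp)$ — neither of which causes real difficulty.
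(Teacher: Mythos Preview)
Your proposal is correct and follows essentially the same approach as the paper's proof: both compute $S\c$ as $X\cap\down\bigl(\bigwedge_{s\in S}s\c\bigr)$ to identify $\mathcal C(X,\perp)$, and both invoke parts (a), (c), (d) of Lemma~\ref{lem:Sasaki projection facts} for, respectively, (S1), well-definedness, and (S2) of $\varphi_A$. You are slightly more explicit than the paper in checking the converse inclusion (that each $X\cap\down x$ is orthoclosed), which the paper absorbs into an ``easy to verify'' remark about the bijection, but there is no substantive difference.
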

\begin{proof}
	Since $0$ is the only element in $L$ that is orthogonal to itself, it is clear that $(X,\perp)$ is an orthoset. Let $x\in X$. Then \[\{x\}^\perp=\{y\in X:y\perp x\}=\{y\in X:y\leq x^\perp\}=\down x^\perp\cap X.\] Moreover, for each non-empty $S\subseteq X$, we have 
	\begin{align*}
		S^\perp & =\{x\in X:x\perp s\text{ for all }s\in S\}\\
		& =\{x\in X:x\leq s^\perp\text{ for all }s\in S\}=\left(\down \bigwedge_{s\in S}s^\perp\right)\cap X.
	\end{align*}
	Since any $A\in\mathcal C(X,\perp)$ is of the form $S^\perp$ for some $S\subseteq X$, it follows that each $A\in\mathcal C(X,\perp)$ is of the form $X\cap\down x$ for some $x\in L$. It is easy to verify that the resulting bijection $\psi:L\to\mathcal C(X,\perp)$, $x\mapsto X\cap\down x$ is an isomorphism of orthomodular lattices.
	
	Now for $A\in\mathcal C(X,\perp)$, with $A=X\cap\down x$ for some $x\in L$, let $\varphi_A:\complement A^\perp\to A$ be the restriction and the corestriction of the Sasaki projection $\pi_{x}:L\to L$. We show that $\varphi_A$ is well defined by showing that $\pi_{x}(y)\neq 0$ for each $y\in\complement A^\perp$. Indeed, we have $\complement A^\perp=\complement(X\cap\down x)\c=\complement\psi(x)\c=\complement\psi(x\c)=X\setminus(X\cap \down x^\perp)=X\setminus\down x\c$. It now follows from (c) of Lemma \ref{lem:Sasaki projection facts} that for each $y\in X$, we have $\pi_{x}(y)\neq 0$ if and only if $y\nleq x\c$, i.e., if and only if $y\in\complement A^\perp$. From (a) and (d) of the same lemma follows that $\varphi_A$ is a Sasaki map.
\end{proof}

\section{Orthomodular spaces}\label{sec:orthomodular spaces}
In this section, we aim at characterising orthomodular spaces by means of their associated Sasaki spaces.

We start by showing that the definition of a Sasaki space can be weakened slightly. We first need one lemma.
\begin{lemma}\label{lem:perp-set contained in orthoclosed set}
	Let $(X,\perp)$ be an orthoset, let $A$ be an orthoclosed subspace of $X$, and let $D$ be a maximal $\perp$-set contained in $A$. If there exists a Sasaki map $\varphi_{D^\perp}$ to $D\c$, then $A=D\cc$.
\end{lemma}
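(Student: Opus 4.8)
The plan is to prove the two inclusions $A\subseteq D\cc$ and $D\cc\subseteq A$ separately. The second is immediate: from $D\subseteq A$ and the fact that $(-)\cc$ is order-preserving we get $D\cc\subseteq A\cc=A$, the last equality because $A$ is orthoclosed. All the work, and the only place where the Sasaki map enters, is in showing $A\subseteq D\cc$.

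For that inclusion I would argue by contradiction: suppose some $a\in A$ does not lie in $D\cc$. Then $a\in\complement D\cc$, which is precisely the domain of the Sasaki map $\varphi:=\varphi_{D\c}\colon\complement D\cc\to D\c$, so $\varphi(a)$ is a well-defined element of $D\c$. The key move is to test $\varphi(a)$ against $A\c$ via (S2). First note that $A\c\subseteq D\c$ since $D\subseteq A$, and that $D\c\cap D\cc=\emptyset$ by irreflexivity of $\perp$; hence $A\c\subseteq\complement D\cc$, and by (S1) the map $\varphi$ restricts to the identity on $A\c$. Now fix $f\in A\c$. Condition (S2) applied to $a$ and $f$ gives $\varphi(a)\perp f$ if and only if $a\perp\varphi(f)$, and since $\varphi(f)=f$ this reads $\varphi(a)\perp f$ if and only if $a\perp f$; the right-hand side holds because $a\in A$ and $f\in A\c$. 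Thus $\varphi(a)\perp f$ for every $f\in A\c$, i.e. $\varphi(a)\in A\cc=A$. On the other hand $\varphi(a)\in D\c$, while the maximality of $D$ as a $\perp$-set contained in $A$ forces $D\c\cap A=\emptyset$: any $z$ in this intersection would be orthogonal to every element of $D$ and, by irreflexivity, distinct from each of them, so $D\cup\{z\}$ would be a strictly larger $\perp$-set inside $A$. This contradiction shows that no such $a$ exists, so $A\subseteq D\cc$, and together with the first inclusion $A=D\cc$.

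The step that needs an idea — really the only non-mechanical point — is the choice to feed elements of $A\c$, rather than, say, elements of $D$ or of $A$ itself, into the second argument of (S2): these are exactly the elements on which $\varphi$ acts as the identity and against which every element of $A$ is orthogonal, which is what simultaneously pins $\varphi(a)$ down inside $A=A\cc$ and yet forces it into the disjoint set $D\c$. Everything else is routine manipulation of $(-)\c$, $(-)\cc$, irreflexivity, and the definition of a maximal $\perp$-set; in particular no use is made of point-closedness, irreducibility, or any rank assumption.
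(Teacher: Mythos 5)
Your proof is correct and follows essentially the same route as the paper's: assume $a\in A\setminus D\cc$, note that $\phi_{D\c}$ fixes $A\c\subseteq D\c$ pointwise, use (S2) to conclude $\phi_{D\c}(a)\in A\cc=A$, and contradict the maximality of $D$ since $\phi_{D\c}(a)\in D\c$. The only difference is presentational: you spell out the easy inclusion $D\cc\subseteq A$ and the disjointness facts ($D\c\cap D\cc=\emptyset$, $D\c\cap A=\emptyset$) a little more explicitly than the paper does.
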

\begin{proof}
	Since $D\subseteq A$, and $A$ is orthoclosed, we have $D^{\perp\perp}\subseteq A$. Assume that there is some $e \in A \setminus D\cc$. Then $e \notperp D\c$, and since $A^\perp\subseteq D^\perp$, we have $x \notperp D\c$ for each $x \in A\c$. Hence $e \perp x = \phi_{D\c}(x)$ implies $\phi_{D\c}(e) \perp x$ for any $x \in A\c$. This means that $\phi_{D\c}(e) \in A\cc=A$, and since $\phi_{D\c}(e) \perp D$ by definition of a Sasaki map for $D^\perp$, we obtain a contradiction with the maximality of $D$.
\end{proof}

\begin{proposition}
	An orthoset $(X,\perp)$ is a Sasaki space if and only if for each $\perp$-set $D\subseteq X$ there is a Sasaki map to $D^\perp$. 
\end{proposition}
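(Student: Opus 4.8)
The plan is to prove the two implications separately; I expect the forward direction to be immediate and the reverse direction to be essentially an application of Lemma~\ref{lem:perp-set contained in orthoclosed set}.

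For the forward direction I would argue as follows: if $(X,\perp)$ is a Sasaki space and $D\subseteq X$ is any $\perp$-set, then $D\c=D\ccc$ shows that $D\c$ is orthoclosed, so the definition of a Sasaki space directly supplies a Sasaki map to $D\c$. Nothing further is needed.

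For the reverse direction, assume that every $\perp$-set $D\subseteq X$ admits a Sasaki map to $D\c$, and let $A\subseteq X$ be orthoclosed; the goal is to produce a Sasaki map to $A$. First I would observe that $A\c$ is orthoclosed (as $A\c=(A\c)\cc$) and that, since the union of a chain of $\perp$-sets is again a $\perp$-set, Zorn's lemma yields a maximal $\perp$-set $D$ contained in $A\c$. By hypothesis there is a Sasaki map $\varphi_{D\c}\colon\complement(D\c)\c\to D\c$, so I can invoke Lemma~\ref{lem:perp-set contained in orthoclosed set} with the orthoclosed set $A\c$ in place of $A$ to conclude that $A\c=D\cc$. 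Taking orthocomplements and using that $A$ is orthoclosed then gives $A=A\cc=D\ccc=D\c$, so $\varphi_{D\c}$ has domain $\complement(D\c)\c=\complement A\c$ and codomain $D\c=A$, i.e.\ it is already a Sasaki map to $A$. Since $A$ was an arbitrary orthoclosed subset, this shows $(X,\perp)$ is a Sasaki space.

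I do not anticipate a genuine obstacle: the only ingredients are the existence of maximal $\perp$-sets via Zorn's lemma and Lemma~\ref{lem:perp-set contained in orthoclosed set}. The one point that requires a little care is to apply that lemma to $A\c$ rather than to $A$ itself --- morally, the content of the statement is that every orthoclosed set is of the form $D\c$ for a $\perp$-set $D$, and it is the maximal $\perp$-subset of $A\c$ that one needs in order to recover $A\c$, and hence $A$, as a double orthocomplement. I would also check the degenerate case $A\c=\emptyset$, which forces $D=\emptyset$, $D\c=X$ and $A=X$, consistently with the identity on $X$ being a Sasaki map to $X$ by Lemma~\ref{lem:sasakimap}(a); but this is routine.
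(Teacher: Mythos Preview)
Your proposal is correct and follows essentially the same route as the paper: the forward direction is immediate since $D\c$ is orthoclosed, and for the reverse direction you pick a maximal $\perp$-set $D\subseteq A\c$, apply Lemma~\ref{lem:perp-set contained in orthoclosed set} to obtain $A\c=D\cc$, and conclude $A=D\c$ so that the given Sasaki map to $D\c$ is a Sasaki map to $A$. The additional remarks (Zorn's lemma, the degenerate case $A\c=\emptyset$) are harmless elaborations but not needed beyond what the paper does.
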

\begin{proof}
	The `only if' direction is trivial. For the other direction, let $A$ be an orthoclosed subset of $X$ and let $D$ be a maximal $\perp$-set contained in $A\c$. Assume that there exists a Sasaki map $\varphi_{D\c}$ to $D\c$. It follows from Lemma \ref{lem:perp-set contained in orthoclosed set} that $A\c=D\cc$. Since $A$ is orthoclosed, we obtain $A=A\cc=D^{\perp\perp\perp}=D\c$, which implies that $\varphi_{D\c}$ is a Sasaki map to $A$.
\end{proof}

An orthoset $(X,\perp)$ is called a \emph{Dacey space} if its associated ortholattice ${\mathcal C}(X,\perp)$ is orthomodular. The next example shows that not every orthoset is a Dacey space.

\begin{example}\label{ex:non-dacey space}
	Let $(X,\perp)$ orthoset with $X=\{a,b,c,d\}$ and where the orthogonality relation $\perp$ is specified by $a\perp b\perp c\perp d$, corresponding to the undirected graph below.
	\begin{center}
		\begin{tikzpicture}
			\node (a) at (0,0) {$a$};
			\node [right  of=a] (b)  {$b$};
			\node [below  of=a] (d)  {$d$};
			\node [below  of=b] (c)  {$c$};
			\draw [black,  thick] (a) -- (b);
			\draw [black,  thick] (c) -- (b);
			\draw [black,  thick] (c) -- (d);
		\end{tikzpicture}
	\end{center}
	
	Then \[{\mathcal C}(X,\perp)=\{X,\emptyset,\{b\},\{c\},\{a,c\},\{b,d\}\},\] where $\{b\}^\perp=\{a,c\}$ and $\{c\}^\perp=\{b,d\}$. Thus ${\mathcal C}(X,\perp)$ is the benzene ring, which is a standard example of an ortholattice that is not orthomodular. Hence $(X,\perp)$ is not a Dacey space.
\end{example}

The following useful criterion for this property is due to Dacey \cite{Dac}.

\begin{lemma} \label{lem:Dacey}
	${\mathcal C}(X, \perp)$ is orthomodular if and only if, for any $A \in {\mathcal C}(X, \perp)$ and any maximal $\perp$-set $D \subseteq A$, we have $A = D\cc$.
\end{lemma}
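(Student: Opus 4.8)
The plan is to reduce orthomodularity of $\mathcal{C}(X,\perp)$ to the standard cancellation reformulation: an ortholattice $L$ is orthomodular if and only if, whenever $a\leq b$ and $a^\perp\wedge b=0$, one has $a=b$. One direction of this reformulation is immediate from the definition; for the converse, given $a\leq b$ one puts $c=a\vee(b\wedge a^\perp)$, notes that $c^\perp\leq a^\perp$ gives $b\wedge c^\perp\leq b\wedge a^\perp\leq c$, hence $b\wedge c^\perp\leq c\wedge c^\perp=0$, and then the cancellation law applied to $c\leq b$ yields $c=b$, i.e. the orthomodular identity. I would record this reformulation first (or cite it), and then work with it. Throughout I would also use that in $\mathcal{C}(X,\perp)$ the meet is intersection, the orthocomplement is $A\mapsto A^\perp$, the bottom element is $\emptyset$, that $S^{\perp\perp\perp}=S^\perp$ for every $S\subseteq X$, and that $A\cap A^\perp=\emptyset$ by irreflexivity of $\perp$.

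For the forward implication, assume $\mathcal{C}(X,\perp)$ is orthomodular, fix $A\in\mathcal{C}(X,\perp)$ and a maximal $\perp$-set $D\subseteq A$. Since $D\subseteq A$ and $A$ is orthoclosed, $D^{\perp\perp}\subseteq A$, so $D^{\perp\perp}$ sits below $A$ in the lattice; to conclude $D^{\perp\perp}=A$ via the cancellation reformulation it suffices to show $(D^{\perp\perp})^\perp\cap A=D^\perp\cap A=\emptyset$. This is exactly where maximality enters: a point $x\in D^\perp\cap A$ cannot already lie in $D$ (otherwise $x\perp x$), so adjoining $x$ to $D$ would produce a strictly larger $\perp$-set inside $A$, a contradiction. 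Hence $A=D^{\perp\perp}$.

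For the converse, assume the Dacey condition and verify the cancellation reformulation. Let $B,A\in\mathcal{C}(X,\perp)$ with $B\subseteq A$ and $B^\perp\cap A=\emptyset$; the goal is $B=A$. By Zorn's lemma pick a maximal $\perp$-set $D\subseteq B$; applying the hypothesis to $B$ gives $B=D^{\perp\perp}$, so $B^\perp=D^{\perp\perp\perp}=D^\perp$. The key step is to promote $D$ to a maximal $\perp$-set of $A$: it is a $\perp$-set contained in $A$, and were there a $\perp$-set $D'$ with $D\subsetneq D'\subseteq A$, any $x\in D'\setminus D$ would be orthogonal to every element of $D$, hence $x\in D^\perp\cap A=B^\perp\cap A=\emptyset$, absurd. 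Applying the hypothesis to $A$ then gives $A=D^{\perp\perp}=B$.

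Overall the proof is short; the only genuine design choice is to switch to the cancellation form of orthomodularity, after which the forward direction is just unfolding definitions plus maximality of $D$. The one substantive step I expect to be the crux is the transfer of maximality in the converse, where the assumption $B^\perp\cap A=\emptyset$ is precisely what blocks enlarging a maximal $\perp$-set of $B$ inside $A$. A small point worth stating explicitly is the degenerate case $A=\emptyset$ (forcing $D=\emptyset$), which is handled trivially since $\emptyset^{\perp\perp}=X^\perp=\emptyset$.
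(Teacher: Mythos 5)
Your proof is correct. Note that the paper itself states this lemma without proof, citing Dacey's thesis, so there is no in-paper argument to compare against; your write-up is a valid self-contained substitute. The reduction to the cancellation form of orthomodularity ($a\leq b$ and $a^\perp\wedge b=0$ imply $a=b$) is the standard route, and both directions are sound: in the forward direction, maximality of $D$ in $A$ together with irreflexivity of $\perp$ gives exactly $D^\perp\cap A=\emptyset$, which is the hypothesis of the cancellation law for the pair $D^{\perp\perp}\subseteq A$; in the converse, the transfer of maximality from $B$ to $A$ hinges precisely on $B^\perp\cap A=\emptyset$ and on the identity $D^{\perp\perp\perp}=D^\perp$, both of which you use correctly. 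The auxiliary facts you invoke (meets in $\mathcal{C}(X,\perp)$ are intersections, $\emptyset$ is the bottom element, Zorn's lemma for maximal $\perp$-sets) all hold, so no gaps remain.
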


\begin{theorem} \label{thm:Sasaki-is-Dacey}
	Any Sasaki space is a Dacey space.
\end{theorem}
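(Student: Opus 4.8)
The plan is to derive the theorem by combining Dacey's criterion (Lemma~\ref{lem:Dacey}) with Lemma~\ref{lem:perp-set contained in orthoclosed set}. Recall that by Lemma~\ref{lem:Dacey}, the ortholattice ${\mathcal C}(X,\perp)$ is orthomodular --- equivalently, $(X,\perp)$ is a Dacey space --- exactly when, for every orthoclosed set $A \subseteq X$ and every maximal $\perp$-set $D \subseteq A$, we have $A = D\cc$. So it suffices to verify this equality for an arbitrary such pair $A$, $D$.

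First I would fix an orthoclosed set $A$ and a maximal $\perp$-set $D \subseteq A$. The key observation is that $D\c$ is automatically orthoclosed, since $S\c = S\ccc$ for any subset $S \subseteq X$; consequently, because $(X,\perp)$ is a Sasaki space, there exists a Sasaki map $\varphi_{D\c}$ to $D\c$. (Alternatively, one may invoke the preceding proposition, which already provides a Sasaki map to $D\c$ for every $\perp$-set $D$.) With such a Sasaki map in hand, Lemma~\ref{lem:perp-set contained in orthoclosed set} applies to the pair $A$, $D$ directly and yields $A = D\cc$, which is precisely the condition required by Lemma~\ref{lem:Dacey}. Hence ${\mathcal C}(X,\perp)$ is orthomodular, i.e., $(X,\perp)$ is a Dacey space.

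I do not expect a genuine obstacle here, since the substance of the argument has been isolated in Lemma~\ref{lem:perp-set contained in orthoclosed set}: a Sasaki map to $D\c$ forces any hypothetical element of $A \setminus D\cc$ to be mapped to an element of $A$ that is orthogonal to all of $D$ and distinct from every element of $D$, contradicting the maximality of $D$. The only points needing care are bookkeeping: that ``maximal $\perp$-set contained in $A$'' is exactly the shared hypothesis of Lemmas~\ref{lem:Dacey} and \ref{lem:perp-set contained in orthoclosed set}, and that the Sasaki map invoked is the one to $D\c$ --- not to $A$ itself --- which is available because $D\c \in {\mathcal C}(X,\perp)$.
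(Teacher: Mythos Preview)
Your proposal is correct and follows essentially the same approach as the paper: fix an orthoclosed $A$ and a maximal $\perp$-set $D\subseteq A$, use that $D\c$ is orthoclosed to obtain a Sasaki map to $D\c$, invoke Lemma~\ref{lem:perp-set contained in orthoclosed set} to get $A=D\cc$, and conclude via Lemma~\ref{lem:Dacey}.
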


\begin{proof}
	Let $D$ be a maximal $\perp$-set contained in an orthoclosed subset $A$ of $X$. Since $D^\perp$ is orthoclosed, there exists a Sasaki map $\varphi_{D\c}$ to $D\c$, hence $A=D\cc$ by Lemma \ref{lem:perp-set contained in orthoclosed set}. Thus, it follows from Lemma \ref{lem:Dacey} that $(X,\perp)$ is indeed a Dacey space.
\end{proof}

From the last theorem it follows that the orthoset in 
Example \ref{ex:non-dacey space} cannot be a Sasaki space, since it is not even a Dacey space.

Next, we present an example of a Dacey space that is not a Sasaki space.

\begin{example}
	Let $A=\{0,a,a^\perp, 1\}$ be a Boolean algebra of four elements, and let $B=\{0,b,b\c,c,c\c,d,d\c,1\}$ be a Boolean algebra of eight elements with atoms $b,c,d$. Let $L$ be the horizontal sum of $A$ and $B$ regarded as orthomodular lattices as depicted in the Hasse diagram below:
	\begin{center}
		\begin{tikzpicture}
			\node (a) at (0,0) {$a$};
			\node (app) at (1,0) {$\phantom{a.}$};
			\node (ap) at (1,0.08)  {$a^\perp$};
			\node [below right of=ap] (b)  {$b$};
			\node [right  of=b] (c)  {$c$};
			\node [right  of=c] (d)  {$d$};
			\node [above  of=b] (bp)  {$b^\perp$};
			\node [above  of=c] (cp)  {$c^\perp$};
			\node [above  of=d] (dp)  {$d^\perp$};
			\node [above  of=bp] (1)  {$1$};
			\node [below  of=b] (0)  {$0$};
			\draw [black,  thick] (a) -- (1);
			\draw [black,  thick] (a) -- (0);
			\draw [black,  thick] (ap) -- (1);
			\draw [black,  thick] (ap) -- (0);
			\draw [black,  thick] (cp) -- (1);
			\draw [black,  thick] (c) -- (0);
						\draw [black,  thick] (d) -- (0);
			\draw [black,  thick] (bp) -- (1);
			\draw [black,  thick] (b) -- (0);
			\draw [black,  thick] (dp) -- (1);
			\draw [black,  thick] (b) -- (cp);
			\draw [black,  thick] (b) -- (dp);
			\draw [black,  thick] (c) -- (bp);
			\draw [black,  thick] (c) -- (dp);
			\draw [black,  thick] (d) -- (bp);
			\draw [black,  thick] (d) -- (cp);
		\end{tikzpicture}
	\end{center}
	
	Then $L$ is an orthomodular lattice without the covering property: we have $a\wedge b=0$, but $a\vee b=1$, which does not cover $b$. Let $X=\{a,a^\perp,b,c,d\}$, be the set of atoms of $L$. Let $(X,\perp)$ be the orthoset obtained by Lemma \ref{lem:pointclosed-completeatomistic} corresponding to the undirected graph depicted below:
	\begin{center}
		\begin{tikzpicture}
			\node (a) at (0,0) {$a$};
			\node (app) at (1,0) {$\phantom{a.}$};
			\node (ap) at (1,0.08)  {$a^\perp$};
			\node [below right  of=ap] (c)  {$c$};
			\node[right of=c](phantom1){$\phantom{a}$};
			\node[above  of=phantom1](phantom2){$\phantom{a}$};
			\node[above of=phantom1](b){$b$};
			\node [right of=phantom1] (d)  {$d$};
			\draw [black,  thick] (a) -- (app);
			\draw [black,  thick] (c) -- (b);
			\draw [black,  thick] (c) -- (d);
			\draw [black,  thick] (b) -- (d);
		\end{tikzpicture}
	\end{center}
	
	For $x\in\{a,a^\perp\}$, we have $\{x\}^\perp=\{x^\perp\}$. For $x\in\{b,c,d\}$, we have $\{x\}^\perp=\{b,c,d\}\setminus\{x\}$. For distinct $x,y\in X$, we have $\{x,y\}^\perp=\{b,c,d\}\setminus\{x,y\}$ if $x,y\in\{b,c,d\}$. In all other cases we have $\{x,y\}^\perp=\emptyset$. 
	Hence \[\mathcal C(X,\perp)=\big\{\emptyset,\{a\},\{a^\perp\},\{b\},\{c\},\{d\},\{b,c\},\{b,d\},\{c,d\},X\big\}.\] Then $\psi:L\to\mathcal C(X,\perp)$ given by $0\mapsto \emptyset$, $1\mapsto X$, $x\mapsto\{x\}$ for $x\in X$, and $x^\perp\mapsto\{b,c,d\}\setminus\{x\}$ for $x=b,c,d$ is clearly an isomorphism of orthomodular lattices by Lemma \ref{lem:pointclosed-completeatomistic}. Thus $(X,\perp)$ is a Dacey space. However, consider $A=\{c,d\}$ in $\mathcal C(X,\perp)$. Then $\complement A^\perp=\complement \{b\}=\{a,a^\perp,c,d\}$. If a Sasaki map $\varphi:\complement A^\perp\to A$ exists, we must have $\varphi(c)=c$ and $\varphi(d)=d$ by condition (S1) of a Sasaki map. For $a\in \complement A^\perp$, we have $\varphi(a)\in A=\{c,d\}$. Without loss of generality assume $\varphi(a)=c$. Then $\varphi(a)=c\perp d$, whereas $a\not\perp d=\varphi(d)$. Hence $\varphi$ cannot be a Sasaki map, so $(X,\perp)$ is a Dacey space that is not a Sasaki space.  
\end{example}

Next we investigate what conditions a Dacey space has to satisfy in order to be a Sasaki space, for which we need a lemma, which was stated in \cite[4.4]{Wil} without a proof. For the convenience of the reader, we include a proof. We note that with \emph{basic} elements of a bounded poset, we mean elements that are either atoms or the least element of the poset. This concept was already introduced in \cite{HHLN}, and it allows for a more elegant formulation of the lemma. 

\begin{lemma}\label{lem:Wilce}
	An orthomodular lattice $L$ has the covering property if and only if all its Sasaki projections send basic elements to basic elements.
\end{lemma}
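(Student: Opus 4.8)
The plan is to reduce the covering property to a single statement about Sasaki projections evaluated at atoms, using the elementary fact that in an orthomodular lattice a Sasaki projection of an atom is precisely a relative orthocomplement inside an interval, and that covering of that interval is detected by whether the relative orthocomplement is an atom. First I would recall the standard interval isomorphism: for $y\leq c$ in an orthomodular lattice $L$, the maps $v\mapsto v\wedge y\c$ and $u\mapsto u\vee y$ are mutually inverse order isomorphisms between $[y,c]$ and $[0,c\wedge y\c]$. The only ingredients are the orthomodular law $v=y\vee(v\wedge y\c)$ for $y\leq v$, and the identity $(u\vee y)\wedge y\c=(u\wedge y\c)\vee(y\wedge y\c)=u$ for $u\leq y\c$, which is valid because $y\c$ commutes both with $y$ and with every element below $y\c$. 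Consequently, for $y<c$, the element $c$ covers $y$ if and only if $c\wedge y\c$ is an atom (and $c\wedge y\c\neq 0$, since $c\wedge y\c=0$ would give $c=y\vee(c\wedge y\c)=y$).

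Next I would set up the bridge to Sasaki projections. Let $a$ be an atom and $x\in L$ with $a\nleq x\c$; put $y=x\c$ and $c=x\c\vee a$, so $y<c$. Since meet is commutative, $c\wedge y\c=(x\c\vee a)\wedge x=x\wedge(x\c\vee a)=\pi_x(a)$, and by Lemma~\ref{lem:Sasaki projection facts}(c) the assumption $a\nleq x\c$ forces $\pi_x(a)\neq 0$. Combining this with the previous paragraph yields the equivalence
\[ x\c\vee a\ \text{covers}\ x\c \quad\Longleftrightarrow\quad \pi_x(a)\ \text{is an atom}. \]

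From here both implications are short. For the forward direction, assume $L$ has the covering property and let $x\in L$ and let $e$ be basic; I would split into cases. If $e=0$, then $\pi_x(0)=x\wedge x\c=0$, which is basic. If $e$ is an atom with $e\leq x\c$, then $\pi_x(e)=0$ by Lemma~\ref{lem:Sasaki projection facts}(c), again basic. If $e$ is an atom with $e\nleq x\c$, then $e\wedge x\c=0$, so the covering property makes $x\c\vee e$ cover $x\c$, whence $\pi_x(e)$ is an atom by the displayed equivalence. For the converse, assume every Sasaki projection sends basic elements to basic elements, and let $a$ be an atom and $z\in L$ with $a\wedge z=0$, i.e., $a\nleq z$. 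I would apply the equivalence with $x=z\c$, so that $x\c=z$: since $a\nleq z$, Lemma~\ref{lem:Sasaki projection facts}(c) gives $\pi_{z\c}(a)\neq 0$, and since $a$ is basic the hypothesis forces $\pi_{z\c}(a)$ to be basic, hence an atom; the equivalence then yields that $z\vee a$ covers $z$. As $a$ and $z$ were an arbitrary atom and element with $a\wedge z=0$, this is the covering property.

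The only structural ingredient is the interval isomorphism of the first paragraph — in essence, the computation of $(u\vee y)\wedge y\c$ for $u\leq y\c$ — so I expect the only real (and rather mild) obstacle to be bookkeeping around it: recognising that $\pi_x(a)$ is exactly the relative orthocomplement $(x\c\vee a)\wedge x$, and that covering of the interval $[x\c,x\c\vee a]$ is equivalent to that element being an atom. Once those are in place, the remainder is routine case analysis using Lemma~\ref{lem:Sasaki projection facts}(c).
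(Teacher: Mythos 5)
Your proof is correct and follows essentially the same route as the paper's: both arguments rest on recognising $\pi_x(a)=(x^\perp\vee a)\wedge x$ as the relative orthocomplement of $x^\perp$ in $x^\perp\vee a$ and using the orthomodular interval isomorphism to translate ``$x^\perp\vee a$ covers $x^\perp$'' into ``$\pi_x(a)$ is an atom''. The only cosmetic difference is that you package this as a single biconditional before splitting into cases, whereas the paper carries out the two directions as separate computations via an auxiliary element $y$ with $y\perp x^\perp$ and $x^\perp\vee a=x^\perp\vee y$.
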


\begin{proof}
	Assume $L$ has the covering property. Let $x\in L$ and let $a\in L$ be an atom. Assume first that $a\leq x^\perp$. Then $\pi_x(a)=0$ by (c) of Lemma \ref{lem:Sasaki projection facts}. Now assume that $a\nleq x^\perp$. By orthomodularity of $L$, we have that $x^\perp\vee a=x^\perp\vee y$ for some $y\in L$ such that $x^\perp\perp y$. By the covering property, $x^\perp$ is covered by $x^\perp\vee a=x^\perp\vee y$, whence $y$ is an atom of $L$, and $\pi_x(a)=x\wedge(x^\perp \vee a)=x\wedge(x^\perp\vee y)=y$. 
	
	Conversely, assume that all Sasaki projections on $L$ send basic elements to basic elements. Let $a\in\mathrm{At}(L)$ and $x\in L$ such that $a\nleq x$. Since $L$ is orthomodular, there is a $y\in L$ such that $x\perp y$ and $a\vee x=x\vee y$. Note that $y\neq 0$, because otherwise $a\vee x=x$ contradicting that $a\nleq x$. Then $\pi_{x^\perp}(a)=x^\perp\wedge(x\vee a)=x^\perp\wedge(x\vee y)=y$ by orthomodularity, which we can apply since $x\perp y$. Thus $y$ must be a basic element of $L$, so an atom of $L$. Since $y$ is an atom that is orthogonal to $x$, it follows that $x$ is covered by $x\vee y=x\vee a$. 
\end{proof}

\begin{lemma}\label{lem:point closed Dacey with covering property implies Sasaki space}
	Let $(X,\perp)$ be a point-closed Dacey space such that ${\mathcal C}(X,\perp)$ has the covering property. Then $(X,\perp)$ is a Sasaki space.
\end{lemma}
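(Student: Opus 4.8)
The plan is to use Lemma \ref{lem:pointclosed-completeatomistic} to pass from the point-closed orthoset $(X,\perp)$ to the complete atomistic ortholattice $L := \mathcal C(X,\perp)$, which is orthomodular since $(X,\perp)$ is a Dacey space. Thus $L$ is a complete orthomodular lattice to which Proposition \ref{prop:OMLinducesSasakispace} can almost be applied; the subtlety is that Proposition \ref{prop:OMLinducesSasakispace} builds the orthoset $L\setminus\{0\}$ from \emph{all} nonzero elements of $L$, whereas our $X$ corresponds under the isomorphism of Lemma \ref{lem:pointclosed-completeatomistic} only to the \emph{atoms} of $L$. So I cannot invoke Proposition \ref{prop:OMLinducesSasakispace} verbatim, but I can mimic its argument restricted to atoms, and this is exactly where the covering-property hypothesis enters.

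Concretely, I would proceed as follows. Identify $X$ with $\mathrm{At}(L)$ via $x \mapsto \{x\}$, and identify any orthoclosed $A \in \mathcal C(X,\perp)$ with an element $p \in L$ via the isomorphism $L \to \mathcal C(X,\perp)$, $p \mapsto \{x \in X : x \leq p\}$; under this identification $A^\perp$ corresponds to $p^\perp$, and $\complement A^\perp$ corresponds to the set of atoms $x$ with $x \nleq p^\perp$. Now define a candidate Sasaki map $\varphi_A \colon \complement A^\perp \to A$ by restricting the Sasaki projection $\pi_p \colon L \to L$ to atoms $x \nleq p^\perp$. By Lemma \ref{lem:Sasaki projection facts}(c), $x \nleq p^\perp$ is equivalent to $\pi_p(x) \neq 0$, so $\pi_p(x)$ is a nonzero element of $L$; and by Lemma \ref{lem:Wilce}, since $L$ has the covering property, $\pi_p$ sends the basic element $x$ (an atom) to a basic element, which being nonzero must be an atom. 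Hence $\varphi_A(x) := \pi_p(x) \in \mathrm{At}(L)$, and since $\pi_p(x) \leq p$ always, we get $\varphi_A(x) \in A$, so $\varphi_A$ is well defined as a map into $A \cong \mathrm{At}(L) \cap {\down}p$. Property (S1) is immediate from Lemma \ref{lem:Sasaki projection facts}(a): for $e \in A$, i.e. $e \leq p$, we have $\pi_p(e) = e$. Property (S2) is immediate from Lemma \ref{lem:Sasaki projection facts}(d): for atoms $e, f$ with $e,f \nleq p^\perp$, $\varphi_A(e) \perp f$ means $\pi_p(e) \leq f^\perp$, i.e. $\pi_p(e) \perp f$, which by (d) is equivalent to $e \perp \pi_p(f)$, i.e. $e \perp \varphi_A(f)$. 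Since $A$ ranges over all orthoclosed subsets, this exhibits $(X,\perp)$ as a Sasaki space.

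The main obstacle — and really the only nontrivial point — is the well-definedness of $\varphi_A$, namely that $\pi_p$ actually lands in $\mathrm{At}(L) \cup \{0\}$ when applied to atoms. This is precisely the content of Lemma \ref{lem:Wilce} combined with the covering-property hypothesis, so the bulk of the work has been front-loaded into that lemma; without the covering property, $\pi_p(x)$ could be a strictly larger element of $L$ and the corestriction to $A$ would fail to be a map into $\mathrm{At}(L)$. Everything else is a routine translation between the orthoset language and the lattice language via Lemma \ref{lem:pointclosed-completeatomistic}, together with direct citations of parts (a), (c), (d) of Lemma \ref{lem:Sasaki projection facts}. I would therefore keep the write-up short: set up the two identifications, state the definition of $\varphi_A$, verify well-definedness using Lemmas \ref{lem:Sasaki projection facts}(c) and \ref{lem:Wilce}, and then dispatch (S1) and (S2) with one line each.
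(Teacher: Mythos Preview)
Your proposal is correct and follows essentially the same approach as the paper: define the Sasaki map via the Sasaki projection $\pi_A$ on the orthomodular lattice ${\mathcal C}(X,\perp)$, use Lemma~\ref{lem:Wilce} together with the covering property to ensure $\pi_A$ sends atoms to atoms (so the map is well defined), and then verify (S1) and (S2) via parts (a), (c), (d) of Lemma~\ref{lem:Sasaki projection facts}. The only cosmetic difference is that the paper works directly with singletons $\{x\}$ in ${\mathcal C}(X,\perp)$ rather than first passing through the isomorphism $X \cong \mathrm{At}(L)$ of Lemma~\ref{lem:pointclosed-completeatomistic}, but this is the same argument.
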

\begin{proof}
	Since $(X,\perp)$ is point-closed, $\{x\}$ is an element of ${\mathcal C}(X,\perp)$ for each $x\in X$. Since $(X,\perp)$ is a Dacey space, ${\mathcal C}(X,\perp)$ is an orthomodular lattice. Let $A\in{\mathcal C}(X,\perp)$. Since ${\mathcal C}(X,\perp)$ has the covering property, it follows from Lemma \ref{lem:Wilce} that the Sasaki projection $\pi_A$ sends each atom of ${\mathcal C}(X,\perp)$ to either the least element or another atom of ${\mathcal C}(X,\perp)$, i.e., $\pi_A(\{x\})$ is either empty or a singleton for each $x\in X$. By (c) of Lemma \ref{lem:Sasaki projection facts} we have $\pi_A(\{x\})=\emptyset$ if and only if $\{x\}\subseteq A^\perp$, hence $\pi_A(\{x\})\neq\emptyset$ if and only if $x\in\complement A^\perp$. By the covering property it then follows then that there is a unique $\varphi(x)\in A$ such that $\pi_A(\{x\})=\{\varphi(x)\}$, which defines a map $\varphi:\complement A\c\to A$. Let $x\in A$. By (a) of Lemma \ref{lem:Sasaki projection facts} we have $\pi_A(\{x\})=\{x\}$, so $\varphi(x)=x$. Let $x,y\in\complement A\c$. Using (d) of the same lemma, $\varphi(x)\perp y$ if and only if $\pi_A(\{x\})=\{\varphi(x)\}\subseteq\{y\}^\perp$ if and only if $\pi_A(\{x\})\perp\{y\}$ if and only if $\{x\}\perp\pi_A(\{y\})=\{\varphi(y)\}$ if and only if $x\perp\varphi(y)$. Thus $\varphi$ is a Sasaki map.
\end{proof}

\begin{lemma} \label{lem:Sasaki-and-join-with-atom}
	Let $(X,\perp)$ be a Sasaki space. Then for each $A \in {\mathcal C}(X,\perp)$ and each $e \notin A$, we have
$A \vee \{ e \}\cc = A \vee \{ \phi_{A\c}(e) \}\cc.$
\end{lemma}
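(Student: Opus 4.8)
\emph{Plan of proof.} Abbreviate $f := \phi_{A\c}(e)$, and recall that, since $A$ is orthoclosed, $(A\c)\c = A$; hence the Sasaki map $\phi_{A\c}$ has domain $\complement(A\c)\c = \complement A$ and codomain $A\c$, so that (as $e\notin A$) the element $f$ is well defined and lies in $A\c$. Two elementary remarks will be used repeatedly. First, $A\c\subseteq\complement A$: an element of $A\cap A\c$ would be orthogonal to itself, contradicting irreflexivity; consequently every element of $A\c$ lies in the domain of $\phi_{A\c}$ and is fixed by it, by (S1). Second, $e\notin A$ forces $A\neq X$, hence $A\c\neq\emptyset$, so no degeneracy arises. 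Finally, joins in the complete ortholattice $\mathcal{C}(X,\perp)$ are computed by the De Morgan formula $B\vee C=(B\c\cap C\c)\c$; using $\{e\}\ccc=\{e\}\c$ this gives $A\vee\{e\}\cc=(A\c\cap\{e\}\c)\c$, and likewise with $e$ replaced by $f$.

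I would prove the two inclusions separately. For $A\vee\{f\}\cc\leq A\vee\{e\}\cc$ it suffices, since $\{f\}\cc$ is the smallest orthoclosed set containing $f$, to show $f\in A\vee\{e\}\cc=(A\c\cap\{e\}\c)\c$; that is, $f\perp g$ for every $g\in A\c$ with $g\perp e$. For such a $g$ we have $g\in\complement A$, so (S2) applied to the pair $e,g$ yields $\phi_{A\c}(e)\perp g$ if and only if $e\perp\phi_{A\c}(g)$, while $\phi_{A\c}(g)=g$ by (S1). Thus $f\perp g$ is equivalent to $e\perp g$, which holds by hypothesis.

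The reverse inclusion $A\vee\{e\}\cc\leq A\vee\{f\}\cc$ is symmetric: it is enough that $e\in(A\c\cap\{f\}\c)\c$, i.e. $e\perp h$ whenever $h\in A\c$ and $h\perp f$. Again $h\in\complement A$, so (S2) applied to $e,h$ gives $\phi_{A\c}(e)\perp h$ if and only if $e\perp\phi_{A\c}(h)$, and $\phi_{A\c}(h)=h$ by (S1); the left-hand side is exactly the hypothesis $h\perp f$ (using symmetry of $\perp$), so $e\perp h$ follows. Combining the two inclusions gives $A\vee\{e\}\cc=A\vee\{\phi_{A\c}(e)\}\cc$.

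I do not anticipate a genuine obstacle: once the right objects are in place, the whole argument is the single biconditional obtained by chaining (S1) and (S2), plus the De Morgan description of joins in $\mathcal{C}(X,\perp)$. The only points that need attention are the bookkeeping with the set-theoretic complement $\complement(-)$ versus the orthocomplement $(-)\c$ — in particular identifying the domain of $\phi_{A\c}$ as $\complement A$ and noting $A\c\subseteq\complement A$, so that (S1) and (S2) may be applied to elements of $A\c$ — and ruling out the trivial case $A\c=\emptyset$, which is excluded by the assumption $e\notin A$.
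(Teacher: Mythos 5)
Your proof is correct and rests on exactly the same computation as the paper's: using (S1) to fix the elements of $A\c$ and (S2) to get $x\perp e\Leftrightarrow x\perp\phi_{A\c}(e)$ for all $x\in A\c$. The paper packages this as the single set identity $(A\cup\{e\})\c=(A\cup\{\phi_{A\c}(e)\})\c$ followed by $(S\cup T)\cc=S\cc\vee T\cc$, whereas you split it into two inclusions, but the argument is essentially identical.
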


\begin{proof}
	Let $x \perp A$. Then $e, x \notin A$, hence $e$ and $x$ are in the domain of $\phi_{A\c}$. Since $\phi_{A\c}(x) = x$, we have that $x \perp e$ is equivalent to $x \perp \phi_{A\c}(e)$. We conclude that $(A \cup \{e\})\c = (A \cup \{\phi_{A\c}(e)\})\c$. The assertion now follows from the fact that $(S\cup T)\cc=S\cc\vee T\cc$ for each $S,T\subseteq X$.
\end{proof}
In the following, an \emph{AC lattice} is meant to be an atomistic lattice with the covering property. 
\begin{lemma} \label{lem:Sasaki-is-AC}
	Let $(X,\perp)$ be a point-closed Sasaki space. Then ${\mathcal C}(X,\perp)$ is AC.
\end{lemma}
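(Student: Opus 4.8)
The plan is to verify the two halves of ``AC'' separately for $L:=\mathcal C(X,\perp)$, using almost entirely results already available. Atomisticity is immediate: since $(X,\perp)$ is point-closed, Lemma~\ref{lem:pointclosed-completeatomistic} shows that $L$ is a complete atomistic ortholattice whose atoms are exactly the singletons $\{x\}$, $x\in X$. Moreover $(X,\perp)$, being a Sasaki space, is a Dacey space by Theorem~\ref{thm:Sasaki-is-Dacey}, so $L$ is orthomodular. Hence the whole content reduces to the covering property: for every $A\in L$ and every atom of $L$ not below $A$, the join of $A$ with that atom covers $A$.

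So I would fix $A\in L$ and $e\in X$ with $e\notin A$; then $\{e\}$ is an atom of $L$ with $\{e\}\not\le A$, so $A< A\vee\{e\}$, and the goal is that $A\vee\{e\}$ covers $A$. The key move is Lemma~\ref{lem:Sasaki-and-join-with-atom}, applied to $A$ and $e$: it gives $A\vee\{e\}\cc=A\vee\{\phi_{A\c}(e)\}\cc$. Since $(X,\perp)$ is point-closed, $\{e\}\cc=\{e\}$ and $\{f\}\cc=\{f\}$ for $f:=\phi_{A\c}(e)$, so $A\vee\{e\}=A\vee\{f\}$. Now $f$ lies in the codomain $A\c$ of the Sasaki map $\phi_{A\c}$ to $A\c$, so $\{f\}\le A\c$; in particular $f\notin A$ (otherwise $f\in A\cap A\c$, impossible by irreflexivity of $\perp$), so $\{f\}$ is again an atom not below $A$. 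Thus the task has been reduced to the special case of an atom lying \emph{below} $A\c$.

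It then remains to prove a purely lattice-theoretic fact: in any orthomodular lattice, if $p$ is an atom with $p\le a\c$, then $a\vee p$ covers $a$. Here I would invoke the standard orthomodular identity that $b\mapsto a\vee b$ is an order isomorphism of the interval $[\,0,a\c\,]$ onto $[\,a,1\,]$, with inverse $c\mapsto c\wedge a\c$; the composites are the identities by the orthomodular law $c=a\vee(c\wedge a\c)$ for $c\ge a$, and by $(a\vee b)\wedge a\c=b$ for $b\le a\c$ (the latter via the Foulis--Holland theorem, as $a$ commutes with both $a\c$ and $b$). This isomorphism sends the atom $p$ of $[\,0,a\c\,]$ to an atom of $[\,a,1\,]$, i.e.\ to an element covering $a$. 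Applying this with $a=A$ and $p=\{f\}$ yields that $A\vee\{f\}=A\vee\{e\}$ covers $A$. Hence $L$ has the covering property, and being atomistic it is AC.

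I do not expect a serious obstacle, since Lemma~\ref{lem:Sasaki-and-join-with-atom} already does the substantive work; the two points deserving care are the reduction in the second paragraph — the observation that a Sasaki map replaces an arbitrary atom not below $A$ by an atom below $A\c$ — and stating correctly (and, if the referee wishes, including a short proof of) the orthomodular interval isomorphism used in the last step.
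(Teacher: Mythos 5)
Your proof is correct and follows essentially the same route as the paper's: atomisticity comes from point-closedness, and the covering property is obtained by using Lemma~\ref{lem:Sasaki-and-join-with-atom} to replace the atom $\{e\}$ by the atom $\{\phi_{A\c}(e)\}$ lying below $A\c$, after which orthomodularity finishes the argument. The only difference is that you spell out the orthomodular interval isomorphism $[0,A\c]\cong[A,X]$ (via Foulis--Holland) that the paper leaves implicit in its final sentence.
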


\begin{proof}
	Since $X$ is point-closed, ${\mathcal C}(X,\perp)$ is atomistic, the atoms being $\{e\}$, $e \in X$.
	
	To show the covering property, let $A \in {\mathcal C}(X,\perp)$ and let $e \in X$ be such that $\{ e \} \nsubseteq A$, that is, $e \notin A$. By Lemma \ref{lem:Sasaki-and-join-with-atom}, $A \vee \{ e \} = A \vee \{ \phi_{A\c}(e) \}$. By Theorem \ref{thm:Sasaki-is-Dacey}, ${\mathcal C}(X,\perp)$ is orthomodular and since $\{ \phi_{A\c}(e) \}$ is an atom, we conclude that $A \vee \{e\}$ covers $A$.
\end{proof}

We collect our results, and under the assumption of point-closedness, we further conclude that the Sasaki maps are uniquely determined.

\begin{proposition} \label{prop:Sasaki-maps-are-Sasaki-projections}
	Any point-closed orthoset $(X,\perp)$ is a Sasaki space if and only if it is a Dacey space such that $\mathcal C(X,\perp)$ is AC, in which case for any $A \in {\mathcal C}(X,\perp)$, the Sasaki map to $A$ is given by
	\[ \{ \phi_{A}(e) \} \;=\; (\{ e \} \vee A\c) \cap A \]
	for any $e \notperp A$.
\end{proposition}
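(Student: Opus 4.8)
The plan is to prove the biconditional first and then establish the closed formula, from which uniqueness of the Sasaki map will be immediate. For the implication that a point-closed Sasaki space is a Dacey space whose lattice $\mathcal C(X,\perp)$ is AC there is nothing left to do: Theorem~\ref{thm:Sasaki-is-Dacey} already gives the Dacey property, and Lemma~\ref{lem:Sasaki-is-AC} gives that $\mathcal C(X,\perp)$ is AC. For the converse, I would observe that every AC lattice has the covering property by definition, so a point-closed Dacey space with $\mathcal C(X,\perp)$ AC is in particular a point-closed Dacey space whose lattice has the covering property; Lemma~\ref{lem:point closed Dacey with covering property implies Sasaki space} then yields that it is a Sasaki space.

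For the formula, assume $(X,\perp)$ is a point-closed Sasaki space, fix $A\in\mathcal C(X,\perp)$ and $e\notperp A$, so that $e$ lies in the domain $\complement A\c$ of every Sasaki map to $A$, and let $\phi_A$ be an \emph{arbitrary} Sasaki map to $A$ (at this stage uniqueness is not yet available). Since $A$ is orthoclosed, $(A\c)\c=A$, so I can invoke Lemma~\ref{lem:Sasaki-and-join-with-atom} with $A\c$ in place of $A$ -- its hypothesis ``$e\notin A\c$'' being exactly $e\notperp A$, and its proof using only the axioms (S1), (S2) so that it applies to the chosen $\phi_A$. Using point-closedness to identify $\{e\}\cc=\{e\}$ and $\{\phi_A(e)\}\cc=\{\phi_A(e)\}$, this gives
\[ \{e\}\vee A\c \;=\; \{\phi_A(e)\}\vee A\c . \]
Now I would intersect both sides with $A$. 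By Theorem~\ref{thm:Sasaki-is-Dacey} the lattice $L:=\mathcal C(X,\perp)$ is orthomodular; in $L$ the meet is set-theoretic intersection and $A\c$ is the orthocomplement of $A$, so the Sasaki projection of Definition~\ref{def:Sasaki projection} acts by $\pi_A(Y)=A\wedge(A\c\vee Y)=(Y\vee A\c)\cap A$. Since $\phi_A(e)\in A$ we have $\{\phi_A(e)\}\leq A$, hence $\pi_A(\{\phi_A(e)\})=\{\phi_A(e)\}$ by Lemma~\ref{lem:Sasaki projection facts}(a). Combining,
\[ (\{e\}\vee A\c)\cap A \;=\; (\{\phi_A(e)\}\vee A\c)\cap A \;=\; \pi_A(\{\phi_A(e)\}) \;=\; \{\phi_A(e)\}, \]
which is the claimed identity. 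As its right-hand side does not depend on the chosen $\phi_A$, this simultaneously shows that the Sasaki map to $A$ is unique.

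I do not expect a real obstacle: all the weight has been shifted onto the earlier lemmas. The only points requiring care are bookkeeping ones -- applying Lemma~\ref{lem:Sasaki-and-join-with-atom} to $A\c$ rather than to $A$, noting that point-closedness is exactly what lets one drop the double-perp on singletons, and recalling that the meet of $\mathcal C(X,\perp)$ is intersection so that $\pi_A(Y)$ indeed equals $(Y\vee A\c)\cap A$. An alternative, marginally longer route would be to quote from the proof of Lemma~\ref{lem:point closed Dacey with covering property implies Sasaki space}, which already exhibits a Sasaki map $\varphi$ with $\{\varphi(e)\}=\pi_A(\{e\})$, and then argue uniqueness separately; but the route through Lemma~\ref{lem:Sasaki-and-join-with-atom} handles existence (already known), the formula, and uniqueness in one stroke, so that is the one I would follow.
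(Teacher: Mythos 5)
Your proposal is correct and follows essentially the same route as the paper: the equivalence is assembled from Theorem~\ref{thm:Sasaki-is-Dacey}, Lemma~\ref{lem:Sasaki-is-AC} and Lemma~\ref{lem:point closed Dacey with covering property implies Sasaki space}, and the formula is obtained by applying Lemma~\ref{lem:Sasaki-and-join-with-atom} to $A\c$ and then using orthomodularity to intersect with $A$. Your version merely spells out the final orthomodularity step (via Lemma~\ref{lem:Sasaki projection facts}(a)) and makes the resulting uniqueness of the Sasaki map explicit, which the paper leaves implicit.
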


\begin{proof}
	The equivalence follows from Lemmas \ref{lem:pointclosed-completeatomistic}, \ref{lem:point closed Dacey with covering property implies Sasaki space} and \ref{lem:Sasaki-is-AC}, and Theorem \ref{thm:Sasaki-is-Dacey}.
	By Lemma \ref{lem:Sasaki-and-join-with-atom} we have that $A\c \vee \{e\} = A\c \vee \{ \phi_A(e) \}$. The assertion follows now from orthomodularity.
\end{proof}

For what follows, we need the characterization of orthomodular spacces by their associated subspace lattices. Furthermore, we call a lattice is \emph{irreducible} if it is not isomorphic to the direct product of two lattices with at least two elements. The proof of the following theorem follows from \cite[34.5]{MaMa}.
\begin{theorem} \label{thm:hermitian-space}
	Let $E$ be a orthomodular space. Then ${\mathcal C}(E)$, the ortholattice of orthoclosed subspaces of $E$, is a complete, irreducible, AC orthomodular lattice.
	
	Conversely, let $L$ be complete, irreducible, AC orthomodular lattice of height $\geq 4$. Then there is an orthomodular space $E$ such that $L$ is isomorphic to ${\mathcal C}(E)$.
\end{theorem}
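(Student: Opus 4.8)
The plan is to prove the two implications separately. The forward direction, from an orthomodular space to its lattice of orthoclosed subspaces, is a direct verification; the converse is the classical coordinatisation theorem for orthomodular lattices (Piron's representation theorem), whose substantial content I would import from \cite[34.5]{MaMa} rather than reprove, only sketching its skeleton.

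For the forward direction I would first check that ${\mathcal C}(E)$ is a complete lattice: an arbitrary intersection of orthoclosed subspaces is again orthoclosed, since $A \subseteq A_i$ for all $i$ gives $A\cc \subseteq A_i\cc = A_i$, whence $(\bigcap_i A_i)\cc = \bigcap_i A_i$; so all meets, and hence all joins, exist. It is atomistic because every orthoclosed $A$ is the orthoclosure of the union of the one-dimensional subspaces it contains, hence their join in ${\mathcal C}(E)$. Orthomodularity of ${\mathcal C}(E)$ is a direct translation of the defining property $M + M\c = E$ of an orthomodular space: for orthoclosed $A \subseteq B$, writing $b \in B$ as $b = a + a'$ with $a \in A$, $a' \in A\c$ forces $a' = b - a \in B \cap A\c$, so $B = A + (B \cap A\c)$ and thus $B = A \vee (B \wedge A\c)$. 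For the covering property, given orthoclosed $A$ and $\lin x \nsubseteq A$, decompose $x = a + a'$ with $a \in A$ and $0 \neq a' \in A\c$; then $A \vee \lin x = A \oplus \lin{a'}$, which is orthoclosed of codimension one over $A$ — using that $A\c$ is itself an orthomodular space in which $\lin{a'}$ splits off as an orthogonal summand — and hence covers $A$. Finally ${\mathcal C}(E)$ is irreducible, i.e.\ has trivial centre: were $0 \neq Z \neq E$ central, pick $x = u + v$ with $0 \neq u \in Z$ and $0 \neq v \in Z\c$; anisotropy of the form gives $\lin x \cap Z = \lin x \cap Z\c = 0$, contradicting $\lin x = (\lin x \wedge Z) \vee (\lin x \wedge Z\c)$.

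For the converse, let $L$ be complete, irreducible, AC and orthomodular of height $\geq 4$. The skeleton I would follow is: (i) the finite-height elements of $L$ form an irreducible geometric lattice of rank $\geq 4$, and — using the orthocomplementation, which acts as a polarity and, by a Baer-type argument, rules out the non-Desarguesian pathologies that irreducibility and the rank bound alone would permit — this is the lattice of finite-dimensional subspaces of a projective geometry, hence by the fundamental theorem of projective geometry it is coordinatised by a left vector space $V$ over a division ring $K$, the atoms of $L$ corresponding to the points of $P(V)$; (ii) the whole of $L$ is thereby realised as a family of subspaces of $V$, closed under the closure operator $U \mapsto \bigvee_{\lin x \subseteq U} \lin x$ computed in $L$, with the lattice operations matching; (iii) the orthocomplementation $(-)^\perp$ restricts to a polarity of $P(V)$, which by the Birkhoff--von Neumann theorem on polarities of projective spaces of dimension $\geq 3$ comes from a non-degenerate reflexive sesquilinear form $(-,-)$ on $V$ relative to an involutory antiautomorphism $(-)^\star$ of $K$; thus $E := V$ with this form is a Hermitian space over the $\star$-sfield $K$ whose orthoclosed subspaces are exactly the elements of $L$, so $L \cong {\mathcal C}(E)$; (iv) transporting the orthomodular law on $L$ across this isomorphism gives $M + M\c = E$ for every closed subspace $M$, i.e.\ $E$ is an orthomodular space.

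The forward direction and step (iv) are routine; the real work — and the reason the paper cites \cite[34.5]{MaMa} rather than giving a proof — is steps (i)--(iii): manufacturing a projective geometry from the abstract lattice, and then converting the lattice-theoretic polarity into a concrete Hermitian form, the point at which the division ring acquires its involution. The main complication is that, unlike in the finite-dimensional classical theory, $L$ need not be modular, so the coordinatisation must be carried out on the finite-height part of $L$ and then propagated upward, and the anisotropy built into the notion of a Hermitian space has to be recovered from the orthomodular law.
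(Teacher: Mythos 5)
Your proposal matches the paper's treatment: the paper gives no proof of this theorem and simply cites \cite[34.5]{MaMa}, exactly the source you defer to for the converse, and your verification of the forward direction (completeness, atomisticity, orthomodularity, covering, irreducibility) is correct and routine. (One small gloss: in your step (i) it is the height bound $\geq 4$, i.e.\ projective dimension $\geq 3$, that makes the coordinatising geometry Desarguesian, not the polarity; but since you import the converse wholesale from the citation this does not affect correctness.)
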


Under the further assumption of irreducibility, we can now show that Sasaki spaces arise from orthomodular spaces.

\begin{theorem} \label{thm:Sasaki-spaces-Orthomodular-spaces}
	Let $(X,\perp)$ be an irreducible, point-closed Sasaki space of rank $\geq 4$. Then there is an orthomodular space $H$ such that $(X,\perp)$ is isomorphic to $(P(H),\perp)$.
	
	Under this isomorphism, the Sasaki maps are, for any closed subspace $S$ of $H$, given by
	\[ \phi_{P(S)}(\lin x) \;=\; (\lin x + S\c) \cap S, \]
	where $x \notperp S$.
\end{theorem}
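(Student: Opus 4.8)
The plan is to push $(X,\perp)$ through the lattice--theoretic correspondence of Lemma~\ref{lem:pointclosed-completeatomistic} and then invoke the representation of orthomodular spaces by their subspace lattices. Put $L := \mathcal C(X,\perp)$. Since $(X,\perp)$ is a point-closed Sasaki space, Theorem~\ref{thm:Sasaki-is-Dacey} makes it a Dacey space, so $L$ is an orthomodular lattice, and it is complete by construction; by Lemma~\ref{lem:Sasaki-is-AC} it is moreover AC. Thus, in order to apply the converse direction of Theorem~\ref{thm:hermitian-space}, the only two things left to check about $L$ are irreducibility and height $\geq 4$.

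For irreducibility I would argue contrapositively. Suppose $L$ decomposes nontrivially as a direct product; then there is a central complementary pair $A,A\c\in L$ with $A\vee A\c=X$ and $A\cap A\c=\emptyset$, with $A$ and $A\c$ both nonempty, and with every $C\in L$ splitting as $C=(C\cap A)\vee(C\cap A\c)$. Applying this to the atom $\{x\}$ (orthoclosed because $(X,\perp)$ is point-closed) for each $x\in X$ forces $x\in A$ or $x\in A\c$, and not both since $A\perp A\c$ and $\perp$ is irreflexive; hence $X=A\disjointunion A\c$ with $e\perp f$ for all $e\in A$, $f\in A\c$, contradicting irreducibility of $(X,\perp)$. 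For the height, take a $\perp$-set $\{e_1,\dots,e_n\}\subseteq X$: point-closedness gives $\{e_k\}\cc=\{e_k\}$, and since $e_{k+1}\perp e_1,\dots,e_k$ we have $\{e_{k+1}\}\nsubseteq\{e_1,\dots,e_k\}\cc$, so the covering property (part of $L$ being AC) makes every inclusion in $\emptyset\subsetneq\{e_1\}\cc\subsetneq\{e_1,e_2\}\cc\subsetneq\cdots\subsetneq\{e_1,\dots,e_n\}\cc$ a covering; thus $L$ has height $\geq n$. As the rank of $(X,\perp)$ is $\geq 4$ there is such a $\perp$-set with $n\geq 4$, giving height $\geq 4$.

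By Theorem~\ref{thm:hermitian-space} there is then an orthomodular space $H$ together with an ortholattice isomorphism $\Psi\colon L\xrightarrow{\sim}\mathcal C(H)$. Restricting $\Psi$ to atoms and composing with the isomorphism $X\cong\mathrm{At}(L)$, $x\mapsto\{x\}$ of Lemma~\ref{lem:pointclosed-completeatomistic}, I obtain a bijection $X\to\mathrm{At}(\mathcal C(H))$. Since $\mathcal C(H)$ is atomistic with the one-dimensional subspaces as atoms (anisotropy of the form makes every line orthoclosed, and any nonzero orthoclosed subspace contains such a line), and since the orthogonality relation that Lemma~\ref{lem:pointclosed-completeatomistic} places on $\mathrm{At}(\mathcal C(H))$ --- namely $\lin x\perp\lin y$ iff $\lin x\subseteq\lin y\c$ --- is precisely the relation of Example~\ref{ex:hilbert space orthoset}, and $\Psi$ preserves $\perp$, the resulting bijection $X\to P(H)$ is an isomorphism of orthosets.

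It remains to read off the formula for the Sasaki maps. Under $\Psi$ an orthoclosed $A\in L$ corresponds to a closed subspace $S$ of $H$, the atom $\{e\}$ to a line $\lin x$, and $A\c$ to $S\c$; moreover $\Psi$ preserves joins and meets, and meets in both $L$ and $\mathcal C(H)$ are intersections. Proposition~\ref{prop:Sasaki-maps-are-Sasaki-projections} gives $\{\phi_A(e)\}=(\{e\}\vee A\c)\cap A$ for $e\notperp A$, which transports to $\phi_{P(S)}(\lin x)=(\lin x\vee S\c)\cap S$ for $x\notperp S$, where $\vee$ now denotes the join in $\mathcal C(H)$. The last step, and the only genuine computation, is to identify this lattice join with the plain subspace sum: writing $x=s+s'$ with $s\in S$ and $s'\in S\c$ (possible by orthomodularity of $H$) one gets $\lin x+S\c=\lin{s'}\oplus S\c$, and an orthogonal algebraic sum of a line and a closed subspace is again closed (again using orthomodularity of $H$ and that lines are orthoclosed), hence coincides with $\lin x\vee S\c$. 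This yields the displayed expression. The \emph{main obstacle} here is not any single deep step --- Theorem~\ref{thm:hermitian-space} does the heavy lifting --- but the care needed in the structure transfer: deducing irreducibility of $L$ from irreducibility of $(X,\perp)$ genuinely uses point-closedness (without it singletons need not be orthoclosed and the splitting argument breaks), and the final identification of $\lin x\vee S\c$ with $\lin x+S\c$ is exactly what makes the Sasaki-map formula come out in the clean stated form rather than with a spurious double orthocomplement.
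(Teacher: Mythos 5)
Your proof is correct and takes essentially the same route as the paper: both reduce the statement to the representation theorem for complete, irreducible, AC orthomodular lattices (Theorem~\ref{thm:hermitian-space}) via Theorem~\ref{thm:Sasaki-is-Dacey} and Lemma~\ref{lem:Sasaki-is-AC}, and then read off the formula from Proposition~\ref{prop:Sasaki-maps-are-Sasaki-projections}; you merely supply details the paper delegates to a citation (irreducibility of $\mathcal C(X,\perp)$, cited from \cite[Lemma 3.6]{Vet2}) or declares clear (height $\geq 4$, and the identification of $\lin x \vee S^\perp$ with $\lin x + S^\perp$). The only blemish is the slip $\lin{s'}\oplus S^\perp$ where you mean $\lin{s}\oplus S^\perp$ with $s$ the component of $x$ in $S$, as the surrounding argument (an orthogonal sum of a line with the closed subspace $S^\perp$) makes clear.
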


\begin{proof}
	By Theorem \ref{thm:Sasaki-is-Dacey} and Lemma \ref{lem:Sasaki-is-AC}, ${\mathcal C}(X,\perp)$ is a complete AC orthomodular lattice of length $\geq 4$. Furthermore, as $(X,\perp)$ is irreducible, so is ${\mathcal C}(X,\perp)$ \cite[Lemma 3.6]{Vet2}. Hence, by Theorem \ref{thm:hermitian-space}, the first part follows. The second part is clear by Proposition \ref{prop:Sasaki-maps-are-Sasaki-projections}.
\end{proof}

\section{Sasaki projections}\label{sec:Sasaki projections}

In Theorem \ref{thm:Sasaki-is-Dacey}, we showed that the associated ortholattice $\mathcal C(X,\perp)$ of a Sasaki space $(X,\perp)$ is an orthomodular lattice. In  \cite{Fin}, Finch gave conditions on an orthocomplemented poset $P$ that imply that $P$ is an orthomodular lattice. Namely, he defined a \emph{Sasaki set of projections} on $P$ as a set $S$ of maps $P\to P$ such that
\begin{itemize}
	
	\item[\rm (i)] each $\pi\in S$ is monotone;
	\item[\rm (ii)] for each $\pi_1,\pi_2\in S$, we have that $\pi_1(1) \leq \pi_2(1)$ implies $\pi_1\circ\pi_2 = \pi_1$;
	\item[\rm (iii)] $\pi( \pi(x)\c ) \leq x\c$ for each $\pi\in S$ and each $x\in P$.
\end{itemize}
Furthermore, he called a Sasaki set of projection on $P$ \emph{full} if for each $x\in P$, there is some $\pi_x\in S$ such that $\pi_x(1)=x$, and showed that the existence of a full Sasaki set of projections on $P$ implies that $P$ is an orthomodular lattice, and that the projections in $S$ are actually Sasaki projections in the sense of Definition \ref{def:Sasaki projection}.  

In this section, let $(X,\perp)$ be a Sasaki space, which is neither necessarily point-closed nor irreducible. We show that the Sasaki maps of $(X,\perp)$ induce a full Sasaki set of projections on $\mathcal C(X,\perp)$. In order to do so, for any $A \in \linebreak {\mathcal C}(X,\perp)$, we choose a  Sasaki map $\phi_A$ to $A$ and we define
\[ \bar\phi_A \colon {\mathcal C}(X,\perp) \to {\mathcal C}(X,\perp)
\komma B \mapsto \{ \phi_A(e) \colon e \in B, \, e \notperp A \}\cc. \]
In the case that $(X,\perp)$ is point-closed, we may identify the elements of $X$ with the singletons and then we may view $\bar\phi_A$ as an extension of $\phi_A$. Indeed, for any $e \in X$ we have
\[ \bar\phi_A(\{e\}) \;=\;
\begin{cases} \{\phi_A(e)\} & \text{if $e \notperp A$,} \\
	\emptyset & \text{if $e \perp A$.}
\end{cases} \]

Furthermore, the next two lemmas show that $\bar\phi_A$ preserves arbitrary joins, and is thus determined by its action on singletons. 

\begin{lemma} \label{lem:selfadjointness-of-barphi}
	For any $A, B, C \in {\mathcal C}(X,\perp)$, we have
	\[ \bar\phi_A(B) \perp C \quad\text{if and only if}\quad B \perp \bar\phi_A(C). \]
\end{lemma}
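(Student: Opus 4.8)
The plan is to unwind the definition of $\bar\phi_A$ and reduce orthogonality between orthoclosed sets to orthogonality between (not necessarily orthoclosed) generating sets, at which point condition (S2) of a Sasaki map applies elementwise. I will use throughout that for an arbitrary subset $S \subseteq X$ and any $C \subseteq X$ one has $S\cc \perp C$ if and only if $S \perp C$, which holds because $S\c = (S\cc)\c$, so that membership of $C$ in $S\c$ and in $(S\cc)\c$ coincide. Since $\bar\phi_A(B) = \{\phi_A(e) : e \in B,\, e \notperp A\}\cc$, this lets me replace $\bar\phi_A(B)$ by its generating set $\{\phi_A(e) : e \in B \cap \complement A\c\}$ whenever I test orthogonality against some fixed $C$, and similarly with $C$.

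A second, elementary observation does the real work. Since $\phi_A$ takes values in $A$, each $\phi_A(e)$ is automatically orthogonal to every element of $A\c$. Hence, for a fixed $e \in B \cap \complement A\c$, the condition ``$\phi_A(e) \perp c$ for all $c \in C$'' is equivalent to ``$\phi_A(e) \perp c$ for all $c \in C \cap \complement A\c$'', i.e.\ the test elements $c$ may be restricted to the domain of $\phi_A$. With this in hand the lemma follows from a chain of equivalences: $\bar\phi_A(B) \perp C$ holds iff $\phi_A(e) \perp c$ for all $e \in B \cap \complement A\c$ and all $c \in C$; iff the same holds for all $e \in B \cap \complement A\c$ and all $c \in C \cap \complement A\c$; iff $e \perp \phi_A(c)$ for all such $e,c$, by (S2), which is now applicable precisely because both $e$ and $c$ lie in $\complement A\c$; iff $e \perp \phi_A(c)$ for all $e \in B$ and all $c \in C \cap \complement A\c$, applying the restriction observation once more, this time on the $B$-side, since $\phi_A(c) \in A$; iff $B \perp \bar\phi_A(C)$, using the generating-set reduction again together with symmetry of $\perp$.

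I do not expect a genuine obstacle here: the proof is purely a matter of bookkeeping the domain restrictions so that (S2) is invoked only for pairs in $\complement A\c$, and of applying the reduction $S\cc \perp C \iff S \perp C$ in the correct direction (it is valid whenever one of the two sets being compared is left arbitrary). The one place to be slightly careful is that the ``restrict the test elements to $\complement A\c$'' step is used twice, once for $C$ and once for $B$; if desired this could be isolated as a one-line sub-claim before assembling the chain.
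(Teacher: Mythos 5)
Your proposal is correct and follows essentially the same route as the paper's proof: reduce orthogonality against $\bar\phi_A(B)$ to orthogonality against its generating set, restrict the test elements to $\complement A\c$ using the fact that $\phi_A$ takes values in $A$, apply (S2) elementwise, and then undo the restriction on the other side. The only cosmetic difference is that you write a single chain of equivalences where the paper proves one implication and appeals to the symmetry of the statement.
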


\begin{proof}
	Assume that $\bar\phi_A(B) \perp C$. Then $\varphi_A(b)\perp c$ for each $b\in B\setminus A^\perp$ and each $c\in C$, so in particular for each $c\in C\setminus A^\perp$. Since $\varphi_A$ is a Sasaki map, it follows that $b\perp\varphi_A(c)$ for each $b\in B\setminus A^\perp$ and each $c\in C\setminus A^\perp$. Since $\varphi_A(c)\in A$ for each $c\in C\setminus A^\perp$, we have always have $b\perp \varphi_A(c)$ for each $b\in A^\perp$, hence  $b\perp\varphi_A(c)$ for each $b\in B$ and each $c\in C\setminus A^\perp$, which is equivalent to saying that $B\perp\bar\varphi_A(C)$. By symmetry, the assertion follows.
\end{proof}

\begin{lemma} \label{lem:barphi-is-joinpreserving}
	For any $A \in {\mathcal C}(X,\perp)$, $\bar\phi_A$ preserves arbitrary joins.
\end{lemma}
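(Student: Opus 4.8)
The plan is to show that $\bar\phi_A$ is a \emph{left adjoint} in the poset ${\mathcal C}(X,\perp)$ and then to invoke the standard order-theoretic fact that any map possessing a right adjoint preserves all joins that exist — which in the complete lattice ${\mathcal C}(X,\perp)$ means arbitrary joins. The candidate right adjoint is
\[ g \colon {\mathcal C}(X,\perp) \to {\mathcal C}(X,\perp) \komma C \mapsto \bar\phi_A(C\c)\c, \]
which is well defined because $C\c$ is orthoclosed whenever $C$ is, so that $\bar\phi_A(C\c)$ makes sense, and because a $\perp$-complement always lies in ${\mathcal C}(X,\perp)$.

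First I would record the elementary observation that for arbitrary subsets $S, T \subseteq X$ one has $S \perp T$ — meaning $s \perp t$ for all $s \in S$ and $t \in T$ — precisely when $S \subseteq T\c$, and equally precisely when $T \subseteq S\c$. Then, to establish the adjunction $\bar\phi_A(B) \subseteq C \iff B \subseteq g(C)$ for all $B, C \in {\mathcal C}(X,\perp)$, apply Lemma \ref{lem:selfadjointness-of-barphi} with $C$ replaced by the orthoclosed set $C\c$: this yields that $\bar\phi_A(B) \perp C\c$ if and only if $B \perp \bar\phi_A(C\c)$. By the observation, the left-hand condition says $\bar\phi_A(B) \subseteq (C\c)\c = C$, using that $C$ is orthoclosed, while the right-hand condition says $B \subseteq \bar\phi_A(C\c)\c = g(C)$. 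Hence $\bar\phi_A(B) \subseteq C$ if and only if $B \subseteq g(C)$.

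With the adjunction in hand the conclusion is immediate: for any family $(B_i)_{i\in I}$ in ${\mathcal C}(X,\perp)$ and any $C \in {\mathcal C}(X,\perp)$, we have $\bar\phi_A(\bigvee_i B_i) \subseteq C$ iff $\bigvee_i B_i \subseteq g(C)$ iff $B_i \subseteq g(C)$ for every $i$ iff $\bar\phi_A(B_i) \subseteq C$ for every $i$ iff $\bigvee_i \bar\phi_A(B_i) \subseteq C$; since $C$ was arbitrary and both $\bar\phi_A(\bigvee_i B_i)$ and $\bigvee_i \bar\phi_A(B_i)$ are orthoclosed, they coincide.

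I do not anticipate a genuine obstacle; the only point needing a little care is the bookkeeping in the middle paragraph — making sure orthoclosedness of $C$ is invoked exactly where the double complement $(C\c)\c$ is collapsed — and noticing that the fact that $\bar\phi_A$ only ``sees'' the part $\{e \in B : e \notperp A\}$ of $B$ is harmless here, since that subtlety has already been fully absorbed into Lemma \ref{lem:selfadjointness-of-barphi}. A more hands-on alternative would be to unwind $\bar\phi_A$ on $(\bigcup_i B_i)\cc$ directly via the identity $(S \cup T)\cc = S\cc \vee T\cc$, but one would then have to argue about the possibly larger set $(\bigcup_i B_i)\cc$, so the adjunction route is cleaner.
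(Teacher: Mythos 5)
Your proof is correct and is essentially the paper's own argument in different packaging: the paper also reduces everything to Lemma \ref{lem:selfadjointness-of-barphi} and runs the chain of equivalences $\bar\phi_A(\bigvee_\iota B_\iota) \perp C \Leftrightarrow \bigvee_\iota B_\iota \perp \bar\phi_A(C) \Leftrightarrow \dots \Leftrightarrow \bigvee_\iota \bar\phi_A(B_\iota) \perp C$ for arbitrary orthoclosed $C$, concluding equality of the two orthoclosed sets. Your version is the same chain with ``$\perp C$'' rewritten as ``$\subseteq C\c$'' and the self-adjointness repackaged as the Galois adjunction $\bar\phi_A \dashv g$ with $g(C) = \bar\phi_A(C\c)\c$; this is a clean reformulation but not a genuinely different route.
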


\begin{proof}
	Let $B_\iota \in {\mathcal C}(X,\perp)$, $\iota \in I$. By Lemma \ref{lem:selfadjointness-of-barphi}, we have for any $C \in {\mathcal C}(X,\perp)$ that $\bar\phi_A(\bigvee_\iota B_\iota) \perp C$ if and only if $\bigvee_\iota B_\iota \perp \bar\phi_A(C)$ if and only if $B_\iota \perp \bar\phi_A(C)$ for all $\iota$ if and only if $\bar\phi_A(B_\iota) \perp C$ for all $\iota$ if and only if $\bigvee_\iota \bar\phi_A(B_\iota) \perp C$. We conclude $\bar\phi_A(\bigvee_\iota B_\iota) = \bigvee_\iota \bar\phi_A(B_\iota)$.
\end{proof}

Note that the maps $\bar\phi_A$, $A \in {\mathcal C}(X,\perp)$, are in a one-to-one correspondence with ${\mathcal C}(X,\perp)$, as $\bar\phi_A(X) = A$. These maps determine the order of ${\mathcal C}(X,\perp)$ as follows.

\begin{lemma} \label{lem:barphi-determines-order}
	For any $A, B \in {\mathcal C}(X,\perp)$, $\bar\phi_A \circ \bar\phi_B = \bar\phi_A$ if and only if $\bar\phi_B \circ \bar\phi_A = \bar\phi_A$ if and only if $A \subseteq B$.
\end{lemma}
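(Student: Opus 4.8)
The plan is to establish the cycle of implications
\[
\bar\phi_A \circ \bar\phi_B = \bar\phi_A
\;\Longrightarrow\;
\bar\phi_B \circ \bar\phi_A = \bar\phi_A
\;\Longrightarrow\;
A \subseteq B
\;\Longrightarrow\;
\bar\phi_A \circ \bar\phi_B = \bar\phi_A ,
\]
using Lemma~\ref{lem:selfadjointness-of-barphi} as the workhorse. Two elementary observations about the maps $\bar\phi$ should be recorded first, valid for any $A\in{\mathcal C}(X,\perp)$. (a) For every $C \in {\mathcal C}(X,\perp)$ one has $\bar\phi_A(C) \subseteq A$, directly from the definition of $\bar\phi_A$ (it is a double orthocomplement of a subset of $A$, and $A$ is orthoclosed); in particular $\bar\phi_A(C)$ is orthoclosed. (b) $\bar\phi_A$ restricts to the identity on the orthoclosed subsets of $A$: if $D \in {\mathcal C}(X,\perp)$ and $D \subseteq A$, then for $e \in D$ we have $e \notin A\c$ (else $e \perp e$, contradicting irreflexivity), so $e$ lies in the domain of $\phi_A$ and $\phi_A(e) = e$ by (S1); hence $\bar\phi_A(D) = D\cc = D$. (In particular each $\bar\phi_A$ is idempotent.)

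For $A \subseteq B \Rightarrow \bar\phi_A \circ \bar\phi_B = \bar\phi_A$, fix $C \in {\mathcal C}(X,\perp)$ and compare the two orthoclosed sets $\bar\phi_A(\bar\phi_B(C))$ and $\bar\phi_A(C)$ by probing them against an arbitrary $D \in {\mathcal C}(X,\perp)$:
\begin{align*}
\bar\phi_A(\bar\phi_B(C)) \perp D
&\iff \bar\phi_B(C) \perp \bar\phi_A(D)
\iff C \perp \bar\phi_B(\bar\phi_A(D)) \\
&\iff C \perp \bar\phi_A(D)
\iff \bar\phi_A(C) \perp D ,
\end{align*}
where the first, second and last equivalences are instances of Lemma~\ref{lem:selfadjointness-of-barphi}, and the third uses observation (a) together with $A \subseteq B$ --- so $\bar\phi_A(D)$ is an orthoclosed subset of $B$ --- and observation (b) for $\bar\phi_B$. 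Now two orthoclosed sets $P, Q$ satisfying $P \perp D \Leftrightarrow Q \perp D$ for all $D \in {\mathcal C}(X,\perp)$ must coincide: taking $D = P\c$ gives $Q \perp P\c$, hence $Q \subseteq P\cc = P$, and symmetrically $P \subseteq Q$. Applying this to $P=\bar\phi_A(\bar\phi_B(C))$ and $Q=\bar\phi_A(C)$ yields $\bar\phi_A(\bar\phi_B(C)) = \bar\phi_A(C)$.

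The implication $\bar\phi_A \circ \bar\phi_B = \bar\phi_A \Rightarrow \bar\phi_B \circ \bar\phi_A = \bar\phi_A$ is handled by the same pattern: for $C, D \in {\mathcal C}(X,\perp)$ one runs the analogous chain of equivalences
\[
\bar\phi_B(\bar\phi_A(C)) \perp D
\iff \bar\phi_A(C) \perp \bar\phi_B(D)
\iff C \perp \bar\phi_A(\bar\phi_B(D))
\iff C \perp \bar\phi_A(D)
\iff \bar\phi_A(C) \perp D,
\]
where this time the hypothesis $\bar\phi_A \circ \bar\phi_B = \bar\phi_A$ is invoked in the third step, and the same cancellation argument finishes it. Finally, $\bar\phi_B \circ \bar\phi_A = \bar\phi_A \Rightarrow A \subseteq B$ is the quick one: evaluating both sides at $X$ and using $\bar\phi_A(X) = A$ gives $\bar\phi_B(A) = A$, and $\bar\phi_B(A) \subseteq B$ by observation (a), so $A \subseteq B$.

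The genuine difficulty here is conceptual rather than computational: one should resist trying to describe the composites $\bar\phi_A \circ \bar\phi_B$ and $\bar\phi_B \circ \bar\phi_A$ explicitly, and instead notice that, through Lemma~\ref{lem:selfadjointness-of-barphi}, each is ``adjoint'' to the other, so that an identity between such maps can be checked argument by argument by testing against all of ${\mathcal C}(X,\perp)$. Once that is seen, everything reduces to observations (a) and (b) and the fact that an orthoclosed set is determined by the family of orthoclosed sets orthogonal to it. A minor point to keep straight in the write-up is that $\bar\phi_A$ is totally defined on ${\mathcal C}(X,\perp)$ even though each $\phi_A$ is only a partial map on $X$; this is precisely what makes observation (b) usable inside the equivalence chains.
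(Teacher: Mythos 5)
Your proof is correct and follows essentially the same route as the paper: the central step is the same chain of equivalences obtained by testing $\bar\phi_B(\bar\phi_A(C))$ and $\bar\phi_A(C)$ against arbitrary $D$ via Lemma~\ref{lem:selfadjointness-of-barphi}, combined with evaluation at $X$ to extract $A \subseteq B$. You merely organise the three conditions as a cycle rather than proving both directions of the first equivalence separately, and you spell out the steps the paper labels ``similar reasoning'' and ``clearly''; the details you supply (your observations (a) and (b), and the fact that orthoclosed sets are determined by their orthogonality tests) are all accurate.
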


\begin{proof}
	Let $A, B \in {\mathcal C}(X,\perp)$ and assume that $\bar\phi_A \circ \bar\phi_B = \bar\phi_A$. For any $C, D \in {\mathcal C}(X,\perp)$, we then have, by Lemma \ref{lem:selfadjointness-of-barphi}, $\bar\phi_B(\bar\phi_A(C)) \perp D$ if and only if $C \perp \bar\phi_A(\bar\phi_B(D)) = \bar\phi_A(D)$ if and only if $\bar\phi_A(C) \perp D$. Hence we have $\bar\phi_B \circ \bar\phi_A = \bar\phi_A$.
	
	By similar reasoning, we conclude that $\bar\phi_B \circ \bar\phi_A = \bar\phi_A$ implies $\bar\phi_A \circ \bar\phi_B = \bar\phi_A$. Furthermore, from $\bar\phi_B \circ \bar\phi_A = \bar\phi_A$, it follows $A = \bar\phi_A(X) = \bar\phi_B(\bar\phi_A(X)) \subseteq B$. Finally, $A \subseteq B$ clearly implies $\bar\phi_B \circ \bar\phi_A = \bar\phi_A$.
\end{proof}

It is now evident that the maps $\bar\phi_A$, $A \in {\mathcal C}(X,\perp)$, are, in Finch's sense, a full Sasaki set of projections for the involutive poset ${\mathcal C}(X,\perp)$. Indeed, this means that the three properties in the following proposition hold.
\begin{proposition}
	For any $A, B \in {\mathcal C}(X,\perp)$, the following holds.
	\begin{itemize}
		
		\item[\rm (i)] $\bar\phi_A$ is order-preserving.
		
		\item[\rm (ii)] If $\bar\phi_A(X) \subseteq \bar\phi_B(X)$, then $\bar\phi_A \circ \bar\phi_B = \bar\phi_A$.
		
		\item[\rm (iii)] $\bar\phi_A( \bar\phi_A(B)\c ) \subseteq B\c$.
		
	\end{itemize}
\end{proposition}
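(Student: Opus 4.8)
The plan is to derive all three items directly from the lemmas just proved, together with the identity $\bar\phi_C(X) = C$ noted before the statement.

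For (i), recall from Lemma~\ref{lem:barphi-is-joinpreserving} that $\bar\phi_A$ preserves arbitrary joins. Hence if $B \subseteq C$ in ${\mathcal C}(X,\perp)$, then $C = B \vee C$, so $\bar\phi_A(C) = \bar\phi_A(B) \vee \bar\phi_A(C)$, which gives $\bar\phi_A(B) \subseteq \bar\phi_A(C)$. Thus $\bar\phi_A$ is order-preserving. For (ii), the hypothesis $\bar\phi_A(X) \subseteq \bar\phi_B(X)$ is, by the identity $\bar\phi_C(X) = C$, just the statement $A \subseteq B$; and by Lemma~\ref{lem:barphi-determines-order}, $A \subseteq B$ is equivalent to $\bar\phi_A \circ \bar\phi_B = \bar\phi_A$, which is the desired conclusion.

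For (iii), I would first translate the claim into the orthogonality language of Lemma~\ref{lem:selfadjointness-of-barphi}: for orthoclosed $D$ one has $D \subseteq B\c$ if and only if $D \perp B$, so it suffices to show $\bar\phi_A(\bar\phi_A(B)\c) \perp B$. Now apply Lemma~\ref{lem:selfadjointness-of-barphi}, with its three arguments instantiated as $A$, $\bar\phi_A(B)\c$, and $B$: this yields that $\bar\phi_A(\bar\phi_A(B)\c) \perp B$ holds if and only if $\bar\phi_A(B)\c \perp \bar\phi_A(B)$. The latter is immediate, since $S\c \perp S$ for every orthoclosed $S$, this being precisely the definition of $S\c$. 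Hence (iii) follows.

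There is no genuinely hard step here; the only point requiring care is the bookkeeping between the set-theoretic definition of $\bar\phi_A$ and the relation $\perp$ on ${\mathcal C}(X,\perp)$, in particular the fact that $B \perp S$ and $B \perp S\cc$ are equivalent when $B$ is orthoclosed, which is what licenses moving freely between the two formulations in the argument for (iii). Once that is in place, the proposition is just a repackaging of Lemmas~\ref{lem:selfadjointness-of-barphi}, \ref{lem:barphi-is-joinpreserving} and \ref{lem:barphi-determines-order}, and it records that the family $(\bar\phi_A)_{A \in {\mathcal C}(X,\perp)}$ is, with $X$ playing the role of the top element, a full Sasaki set of projections for the involutive poset ${\mathcal C}(X,\perp)$ in Finch's sense, each $\bar\phi_A$ satisfying $\bar\phi_A(X) = A$.
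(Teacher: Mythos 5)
Your proof is correct and follows exactly the route the paper intends: the paper states the proposition as ``evident'' from Lemmas~\ref{lem:selfadjointness-of-barphi}, \ref{lem:barphi-is-joinpreserving} and \ref{lem:barphi-determines-order} together with $\bar\phi_A(X)=A$, and you have simply written out those deductions (monotonicity from join-preservation, (ii) from the order lemma, and (iii) by instantiating the self-adjointness lemma at $\bar\phi_A(B)\c$ and $B$). No gaps.
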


\section{Classical Hilbert spaces}\label{sec:Classical Hilbert spaces}

We have characterised orthomodular spaces, sometimes also called generalised Hilbert spaces, as Sasaki spaces. We conclude by pointing out that, in the case of infinite rank, we may employ known conditions to characterise the Hilbert spaces over the real or complex number field. The tool is Sol\` er's theorem \cite{Sol}:

\begin{theorem} \label{thm:Soler}
	Let $H$ be an orthomodular space over a $\star$-field $K$ containing an infinite set of mutually orthogonal vectors of unit length. Then $K$ is one of $\Reals$, $\Complexes$, or $\Quaternions$, that is, $H$ is a classical Hilbert space.
\end{theorem}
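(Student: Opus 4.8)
Since Sol\`er's theorem is a deep result in its own right, the plan is to take it as a black box --- this is exactly what the citation \cite{Sol} provides --- and here I only sketch the strategy of its proof, in the streamlined form due to Holland.

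The starting data is an infinite family $(e_i)_{i\in\Naturals}$ of mutually orthogonal vectors of a common length, say $\herm{e_i}{e_i}=\lambda$ for one fixed Hermitian $\lambda\in K$, \emph{the same} for every $i$; it is this common scalar that makes the argument run. I would work inside the closed subspace $M=[e_i:i\in\Naturals]\cc$, which by orthomodularity of $H$ splits off as $M\oplus M\c=H$, so that it suffices to pin down $K$ through $M$.

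The core of the proof is a purely algebraic conclusion about $K$. The plan is to feed carefully chosen closed subspaces of $M$ --- typically orthoclosures of spans of vectors of the form $e_1+\alpha\,e_j$ with $\alpha\in K$ and $j\geq 2$, whose orthogonal complements one computes componentwise against the $e_i$ --- into the orthomodular law $N\oplus N\c=H$, and to read off from the resulting identities, for each $\alpha$, membership relations in the cone $\Sigma\subseteq K$ of finite sums of elements $\gamma\gamma^\star$. Iterating these relations --- this is the technical heart, ``Sol\`er's lemma'' --- forces the order on the symmetric elements of $K$ induced by $\Sigma$ to be \emph{Archimedean}, i.e.\ no symmetric element dominates all the integer multiples of $1$. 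I expect this to be the genuine obstacle: it is the one place where the infinitude of the orthogonal family and the orthomodularity of $H$ are both used in an essential and far from routine way.

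Granting that $K$ carries an Archimedean ordering compatible with $(-)^\star$, the endgame is classical. The fixed field of $(-)^\star$ is then an Archimedean ordered field, hence embeds in $\Reals$, and the completeness that is implicit in orthomodularity forces it to be all of $\Reals$; $K$ is then a real division algebra carrying a compatible absolute value, so $K$ is one of $\Reals$, $\Complexes$, $\Quaternions$, with $(-)^\star$ the corresponding (possibly trivial) conjugation. Since $H$ is orthomodular over one of the classical $\star$-sfields, \cite{AmAr} finally gives that $H$ is metrically complete, i.e.\ a classical Hilbert space. For the complete argument I refer to \cite{Sol}.
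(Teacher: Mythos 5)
The paper offers no proof of this statement: it is Sol\`er's theorem, imported as a black box from \cite{Sol}, with the final passage from ``orthomodular over a classical $\star$-sfield'' to ``metrically complete'' supplied by \cite{AmAr}. Your proposal does exactly the same thing --- it defers to \cite{Sol} --- and your accompanying sketch is a fair outline of the standard (Holland-style) route through the Archimedean ordering of the symmetric elements, so it matches the paper's approach.
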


Our postulate will again concern the existence of maps, which in this case, however, are symmetries. The following postulate was proposed, in similar form, e.g.~ in \cite{AeSt} and \cite{Vet}.

\begin{definition}
	We call an orthoset $(X,\perp)$ \emph{transitive} if, for any $e, f \in X$, there is an automorphism $\tau$ of $X$ such that $\tau(e) = f$ and $\tau(x) = x$ for any $x \perp e, f$.
\end{definition}

\begin{theorem}
	Let $(X,\perp)$ be a transitive, irreducible, point-closed Sasaki space of infinite rank. Then there is a Hilbert space $H$ over $\Reals$, $\Complexes$, or $\Quaternions$ such that $(X,\perp)$ is isomorphic to $(P(H),\perp)$.
\end{theorem}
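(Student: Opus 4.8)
The plan is to feed the structural theorem already proved into Sol\`er's theorem, using transitivity to manufacture the infinite orthonormal family that Sol\`er's hypothesis demands. Since the rank of $(X,\perp)$ is infinite it is in particular $\geq 4$, so Theorem~\ref{thm:Sasaki-spaces-Orthomodular-spaces} applies and produces an orthomodular space $H$ over a $\star$-sfield $K$ together with an isomorphism $(X,\perp)\cong(P(H),\perp)$. By Theorem~\ref{thm:Soler} it then suffices to exhibit in $H$ an infinite set of mutually orthogonal vectors of equal length: this forces $K\in\{\Reals,\Complexes,\Quaternions\}$ and makes $H$ a classical Hilbert space, which is the assertion. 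Infinite rank means $P(H)$ contains an infinite $\perp$-set, i.e.\ $H$ contains an infinite family $(e_i)_{i\in\Naturals}$ of pairwise orthogonal nonzero vectors; the only point to settle is that the lengths $(e_i,e_i)\in K$ can be made to coincide after rescaling the $e_i$.

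The heart of the argument is to convert transitivity, a statement about the orthoset, into a statement about vectors of $H$. Since $\dim H\geq 3$, the coordinatisation behind Theorem~\ref{thm:hermitian-space} (see \cite{MaMa}) also determines the automorphisms of $\mathcal C(X,\perp)\cong\mathcal C(H)$: every automorphism of the orthoset $(P(H),\perp)$ is induced by a bijective $\sigma$-semilinear map $U\colon H\to H$ that preserves the Hermitian form up to a fixed automorphism $\sigma$ of $K$ and fixed proportionality factors. Now fix $i\neq j$ and apply transitivity to the orthogonal rays $\lin{e_i},\lin{e_j}\in X$: we obtain an automorphism $\tau$ of $X$ with $\tau(\lin{e_i})=\lin{e_j}$ and $\tau(x)=x$ for all $x\perp\lin{e_i},\lin{e_j}$. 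Because $\lin{e_i}\perp\lin{e_j}$ and $\tau$ preserves $\perp$, one checks that $\tau(\lin{e_j})=\lin{e_i}$ as well, and that $\tau$ fixes every ray in the orthoclosed subspace $W:=\{\lin{e_i},\lin{e_j}\}\c=(\lin{e_i}+\lin{e_j})\c$. As $\dim H=\infty$, $W$ has dimension $\geq 2$, so the lift $U$ of $\tau$ maps each line of $W$ to itself and is therefore multiplication by a single nonzero scalar $\alpha$ on $W$.

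With $U$ available, obtaining the equality of lengths is bookkeeping. Writing $Ue_i=\mu e_j$ for a nonzero $\mu\in K$, one compares $(Ue_i,Ue_i)$ with the form-preservation identity, and uses that $U$ acts as the scalar $\alpha$ on every nonzero $w\in W$ (so that $\sigma((w,w))$ is expressed through $(w,w)$ by a rule that does not depend on $w$) to conclude that $(e_i,e_i)$ and $(e_j,e_j)$ lie in one class for the equivalence $a\sim b$ iff $a=\lambda\c b\lambda$ for some $\lambda\in K^\times$ — that is, the relation of being the lengths of two representatives of one ray. Hence $\lin{e_j}$ has a representative $e_j'$ with $(e_j',e_j')=(e_i,e_i)$. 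Performing this with $i=1$ and $j$ running over $\Naturals$ replaces $(e_i)_i$ by an infinite orthogonal family of vectors all of length $(e_1,e_1)$, and Sol\`er's theorem (Theorem~\ref{thm:Soler}) finishes the proof.

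I expect the main obstacle to be exactly the middle step: making the passage from an automorphism of $(P(H),\perp)$ to a semilinear map of $H$ precise in the general orthomodular (not yet Hilbert) setting, and keeping careful control of the associated automorphism $\sigma$ of $K$ and of the proportionality scalars, in order to extract the single relation ``$(e_i,e_i)\sim(e_j,e_j)$ whenever $e_i\perp e_j$''. Once that relation is in place, the remaining steps — rescaling to a common length and invoking Theorems~\ref{thm:Soler} and~\ref{thm:Sasaki-spaces-Orthomodular-spaces} — are routine.
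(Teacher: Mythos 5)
Your proposal follows the paper's proof exactly: Theorem~\ref{thm:Sasaki-spaces-Orthomodular-spaces} supplies the orthomodular space $H$, infinite rank supplies an infinite orthogonal family, transitivity supplies the automorphisms interchanging orthogonal rays while fixing their common orthocomplement, and Sol\`er's theorem finishes. The middle step you flag as the main obstacle --- lifting such an automorphism to a $\sigma$-semilinear map, forcing it to act as a scalar on the (at least two-dimensional) fixed subspace, and extracting $(e_j,e_j)=\lambda(e_i,e_i)\lambda^\star$ --- is exactly what the paper outsources to Wigner's theorem in the form of \cite[Lemma 1]{May}, which asserts that such an automorphism is induced by a unitary operator; your sketch of that argument is sound, so the two proofs are the same modulo whether that lemma is cited or re-derived.
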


\begin{proof}
	By Theorem \ref{thm:Sasaki-spaces-Orthomodular-spaces}, there is an orthomodular space $H$ over some $\star$-sfield $K$ such that $(X,\perp)$ is isomorphic to $(P(H),\perp)$. Without loss of generality, we may assume that $H$ has a unit vector $e_1$. As $X$ has infinite rank, $H$ is infinite-dimensional and hence possesses an infinite set $e_1, e_2,\ldots$ of mutually orthogonal vectors. By transitivity, there is, for each $i \geq 1$, an automorphism $\tau$ of $(P(H),\perp)$ such that $\tau(\lin{e_i}) = \lin{e_{i+1}}$ and $\tau(\lin x) = \lin x$ for any $x \perp e_i, e_{i+1}$. By Wigner's Theorem as formulated in \cite[Lemma 1]{May}, $\tau$ is induced by a unitary operator on $H$. We conclude that there is an infinite set of mutually orthogonal vectors of unit length. By Theorem \ref{thm:Soler}, the assertion follows.
\end{proof}

To single out the field of real or the one of complex numbers, also automorphisms can be used. We refer the reader to \cite{May}, see also \cite{Vet}.

\section*{Acknowledgements}
This research was supported by the bilateral Austrian Science Fund (FWF)
Project I 4579-N and Czech Science Foundation (GA\v{C}R) Project 20-09869L “The many facets of orthomodularity”.

\end{document}